\newtheoremstyle{thry}
{3pt}
{3pt}
{\itshape}
{}
{\scshape\bfseries}
{:}
{.5em}
{}
\theoremstyle{thry}
\newcommand{\bR}{\mathbb{R}}
\newcommand{\bP}{\mathbb{P}}
\newcommand{\bE}{\mathbb{E}}
\newcommand{\bbeta}{\bm{\beta}}
\newcommand{\bx}{\bm{x}}
\newcommand{\tr}{\mathrm{tr}}
\newcommand{\var}{\mathrm{var}}
\newcommand{\diag}{\mathrm{diag}}
\newcommand{\tabincell}[2]{\begin{tabular}
	{@{}#1@{}}#2\end{tabular}}
\newtheorem{theorem}{\indent Theorem}
\newtheorem{lemma}{\indent  Lemma}
\newtheorem{assumption}{\indent  Assumption}
\newtheorem{proposition}{\indent  Proposition}
\numberwithin{equation}{section}
\renewcommand*{\@fnsymbol}[1]{\ensuremath{\ifcase#1
\or \dagger \or \ddagger \or \mathsection 
\or \mathparagraph \or \| \or \dagger\dagger
\or \ddagger\ddagger \else \@ctrerr\fi}}
\begin{document}

\title{Hypothesis testing for general network models}
\author{
\textsc{Kang Fu}\thanks{School of Mathematics and Statistics, and Key Laboratory of Nonlinear Analysis \& Applications (Ministry of Education), Central China Normal University, Wuhan 430079, China. \texttt{Email:} \textit{fukang@mails.ccnu.edu.cn}.},
\hspace{1mm}
\textsc{Jianwei Hu}\thanks{School of Mathematics and Statistics, and Hubei Key Laboratory of Mathematical Sciences, Central China Normal University, Wuhan 430079, China. \texttt{Email:} \textit{jwhu@ccnu.edu.cn}.},
\hspace{1mm}
and
\hspace{1mm}
\textsc{Seydou Keita}\thanks{School of Mathematics and Statistics, Central China Normal University, Wuhan 430079, China. \texttt{Email:} \textit{badco62003@yahoo.fr}.}\\
\textit{Central China Normal University}
}
\date{\empty} 

\maketitle

\vspace*{-1cm}
\begin{abstract}
The network data has attracted considerable attention in modern statistics. In research on complex network data, one key issue is finding its underlying connection structure given a network sample. The methods that have been proposed in literature usually assume that the underlying structure is a known model. In practice, however, the true model is usually unknown, and network learning procedures based on these methods may suffer from model misspecification. To handle this issue, based on the random matrix theory, we first give a spectral property of the normalized adjacency matrix under a mild condition. Further, we establish a general goodness-of-fit test procedure for the unweight and undirected network. We prove that the null distribution of the proposed statistic converges in distribution to the standard normal distribution. Theoretically, this testing procedure is suitable for nearly all popular network models, such as stochastic block models, and latent space models. Further, we apply the proposed method to the degree-corrected mixed membership model and give a sequential estimator of the number of communities. Both simulation studies and real-world data examples indicate that the proposed method works well.

\textbf{Key words:} Hypothesis testing; Network data; Normal distribution; Spectral method; Stochastic block model; Wigner matrix
\end{abstract}

\section{Introduction}

Network data appear in many disciplines, such as sociology, biology, computer science, and many others \citep{Scott:2000,Guimera:2005}. A network usually represents a relationship among a collection of individuals, such as protein networks and social relationship networks. In general, a network $\mathcal{G}$ with $n$ nodes can be represented by a corresponding adjacency matrix $A\in\bR^{n\times n}$, where $(i,j)$-entry of $A$ represents the link relationship between node $i$ and node $j$. For the unweighted network, $A_{ij}=1$ if there is a link from node $i$ to node $j$ and $A_{ij}=0$ otherwise. In our study, we mainly focus on the undirected and unweighted network, that is, $A$ is a symmetric and binary matrix.

There are various studies on complex network data, and a majority of network models have been proposed, such as the Erd\H{o}s-R\'{e}nyi (E-R) model \citep{Erdos:2012,Erdos:2013}, the $\beta$-model \citep{Chatterjee:2011}, the stochastic block model (SBM) \citep{Holland:1983}, the degree-corrected stochastic block model (DCSBM) \citep{Karrer:2011}, the degree-corrected mixed membership (DCMM) model \citep{Jin:2023}, and so on. In the past decades, network data analysis mainly depends on these classical models. Under a given model, implementing statistic inference for network data is a popular interest of research. In addition, hypothesis testing is another research hot-spot in network data analysis, especially in the SBM and its variants. For the network with community structure, hypothesis testing was initially used to test whether network data has a community structure 	\citep{Bickel:2016,Cammarata:2023}. Later, under the framework of the SBM, these methods have also been extended to estimate the number of communities. Specifically, given an adjacency matrix $A$, the basic idea is to consider the hypothesis test problem $K=K_0$, where $K$ and $K_0$ are the true and hypothesis number of communities, respectively. Based on the largest singular value of a residual matrix, \cite{Lei:2016} proposed a goodness-of-fit test for the SBM and extended this theory to estimate the number of communities by the sequential method. Similarly, \cite{Hu:2021} also investigated the goodness-of-fit test for the SBM. They considered the maximum entry-wise deviation between the adjacency matrix and the corresponding edge probability matrix. Further, \cite{Wu:2022} proposed a new statistic to investigate the goodness-of-fit for SBMs by introducing the local smoothing technology, and the statistic is adapted to the case of a small-sized community with an unbalanced community. Under the framework of DCMM models, \cite{Jin:2021} and \cite{Cammarata:2023} considered the global testing problem, i.e., whether an undirected network has one or multiple communities. Further, \cite{Fan:2022} considered the testing of whether two nodes share a common community membership. As a general case, \cite{Du:2023} considered the hypothesis testing problem that two vertices $i$ and $j$  have the same latent positions under generalized random dot product graphs. As introduced later, the DCMM model is a specified case of the generalized random dot product graph. Hence, the results in \cite{Du:2023} are generalizations of the corresponding results in \cite{Fan:2022}.

All the literature mentioned above considers one-sample scenarios. In the hypothesis test of the network analysis, another issue is the two-sample test, that is, whether two network samples are generated from the same network model. Under the random dot product graph, based on the kernel function, \cite{Tang:2017} proposed a testing method to justify whether two independent finite-dimensional samples have a common population model. Further \cite{Ghoshdastidar:2020} proposed two test statistics using the Frobenius norm and spectral norm. \cite{Chen:2021} used the trace of a normalized matrix to obtain the statistic and proved that the null distribution is the standard normal distribution. Under the framework of SBMs, \cite{Fu:2024} and \cite{Fu:2023} extended one-sample testing methods to the case of two samples, and proposed two statistics to test whether two samples have the same community structure.

Notice the studies mentioned above are based on a known model. In statistical learning, the true model is usually unknown. Hence, it is significant to choose an appropriate model for network learning. In this article, we consider constructing a general framework of the goodness-of-fit for network models. For a general network, we give a spectral property of the normalized adjacency matrix under a mild condition, i.e., the trace of the third order of the normalized adjacency matrix converges in distribution to a normal distribution. It is worth noting that our result is a nontrivial conclusion, including the results of \cite{Dong:2020} and \cite{Wu:2024} as special cases since they only consider SBMs. The main contribution is twofold. First, by the eigen-decomposition, we use a new technology strategy to prove the spectral property, which only needs a weaker condition for the estimators $\hat{p}_{ij}$'s. Second, based on this spectral property, we propose a goodness-of-fit test procedure for nearly all existing network models, such as $\beta$-models, stochastic block models, and latent space models. Compared with the test procedure in \cite{Lei:2016} and \cite{Hu:2021}, the proposed statistic converges to a normal distribution fast and does not require a bootstrap correction process. Meanwhile, the proposed test procedure is suitable for more general models, not limited to stochastic block models. Further, we apply the proposed method to DCMM models and propose an empirically estimated method by sequentially using the proposed goodness-of-fit test procedure. Empirically, we also find that the sequential testing estimation works well.

The remainder of the article is organized as follows. In Section \ref{sec:method}, we introduce the basic backgrounds of some common network models. The spectral property of the adjacency and a goodness-of-fit test procedure are also given in this Section. In Section \ref{sec:appdcmm}, we apply the proposed method to estimate the number of communities in DCMM models. Simulation studies and real-world data examples are given in Sections \ref{sec:simulation} and \ref{sec:real}, respectively. All technical proofs are postponed to the Appendix.

\section{Model and methods}\label{sec:method}

In this section, we first introduce some classical network models and then give a general goodness-of-fit framework.
\subsection{Network models}\label{sec:model}
Before formally introducing models, we introduce some notations. For a matrix $A\in\bR^{n\times n}$, we use $\tr(A)$ and $\diag(A)$ to denote the trace of matrix and diagonal matrix with diagonal elements $(A_{11},\ldots,A_{nn})$. For a vector $\theta=(\theta_1,\ldots,\theta_n)$, let $\diag(\theta)$ be a diagonal matrix with diagonal elements $(\theta_1,\ldots,\theta_n)$. We use $\bm{1}_n$ and $I_n$ to denote the $n$-dimensional vector with all entries 1 and $n$-dimensional identical matrix. The notation $I[\cdot]$ is the indicator function. 
For a sequence of random variables $X_n$ and a positive sequence $a_n$, we write $X_n=O_p(a_n)$ if for any $\varepsilon>0$, there exists finite $M>0$ and $N>0$ such that $ \forall n>N, \bP\{|X_n/a_n|\geq M\}<\varepsilon$. We also write $X_n=o_p(a_n)$ if for any $\varepsilon>0$, $\bP\{\vert X_n/a_n\vert\geq\varepsilon\}\rightarrow 0$.

\textit{\textbf{Erd\H{o}s-R\'{e}nyi model.}} The Erd\H{o}s-R\'{e}nyi model proposed by \cite{Erdos:1957} is the most basic model in network data analysis. The model assumes that there is an edge between any pairs of nodes $(i,j)$ with probability $p$. Suppose that $A\in\{0,1\}^{n\times n}$ is an adjacency matrix of undirected network $\mathcal{G}$. Throughout this paper, we assume that the self-loops are not allowed, i.e., $A_{ii}=0$ for $1\leq i\leq n$. Hence, for an adjacency matrix $A$ from the E-R graph, the $(i,j)$-entry of $A$ follows the Bernoulli distribution with probability $p$. In practice, the link probability $p$ is usually unknown. A simple method to estimate $p$ is calculating the proportion of pairs of nodes that form an edge, that is, $\hat{p}=\sum_{i\neq j}A_{ij}/(n(n-1))$.

\textit{\textbf{$\beta$-model.}} The $\beta$-model, proposed by \cite{Chatterjee:2011}, is a special case of a class of models known as node-parameter models, where each node degree is associated with a corresponding parameter. For an undirected network with $n$ nodes, the $\beta$-model assume that the edge between nodes $i$ and $j$ exists with probability 
\[
p_{ij} = \dfrac{e^{\beta_i+\beta_j}}{1+e^{\beta_i+\beta_j}},
\]
independently of all other edges, where $\beta_i$ is the node parameter (also known asthe ``attractiveness" of vertex) of node $i$. It is not difficult to see that the probability connecting the node $i$ and node $j$ only depends on the parameter of the node $i$ and node $j$. When all $\beta_i$'s are equal to each other, the $\beta$-model naturally degenerates to the E-R model. Since the $\beta$-model can simply capture important features of real-world networks, the $\beta$-model, and its variations have been studied widely \citep{Yan:2013,Rinaldo:2013,Mukherjee:2018}. Under the framework of the $\beta$-model, let $d_i=\sum_{j\neq i}A_{ij}$ be the degree of the node $i$. Then, the likelihood function can be written as:
\[
l(\bbeta|A) = \dfrac{e^{\sum_i\beta_id_i}}{\prod_{i<j}(1+e^{\beta_i+\beta_j})}.
\]

Denote $\hat{\bbeta}=\mathop{\arg \min}\limits_{\bbeta}\log l(\bbeta|A)$ as the maximum likelihood estimator (MLE). The MLE can be obtained by solving the following equations:
\begin{equation}\label{eq:MLE}
	d_i=\sum_{j\neq i}\dfrac{e^{\hat{\beta}_i+\hat{\beta}_j}}{1+e^{\hat{\beta}_i+\hat{\beta}_j}}, (i=1,\ldots,n).
\end{equation}
\cite{Chatterjee:2011} established the consistency of $\hat{\bbeta}$. Specifically, let $L_n=\max_{i}|\beta_i|$, then there is a constant $C(L_n)$ depending only on $L_n$ such that $\bP\{\max_{1\leq i\leq n}|\hat{\beta}_i-\beta_i|\leq C(L_n)\sqrt{n^{-1}\log n}\}\geq 1-C(L_n)n^{-2}$. Further, by approximating the inverse of the Fisher information matrix, \cite{Yan:2013} proved the asymptotic normality of $\hat{\bbeta}$. Then, \cite{Rinaldo:2013} gave the necessary and sufficient conditions for the existence and uniqueness of $\hat{\bbeta}$.

\textit{\textbf{Stochastic block model.}} The stochastic block model was first proposed by \cite{Holland:1983}, and is usually used to model the network with community structure. Compared to the E-R model, a typical characteristic of SBMs is that nodes have a distinct community structure and the link probability between nodes only depends on the communities that they belong to. Formally, under the setting of the SBM, the $n$ nodes are clustered to $K$ disjoint sets, $\mathcal{C}_1,\ldots,\mathcal{C}_K$. Then, the link probability between nodes $i$ and $j$ is $p_{ij} = B_{\sigma_i\sigma_j}$, where $B\in[0,1]^{K\times K}$ is a $K\times K$ probability matrix and $\sigma_i=k$ if $i\in\mathcal{C}_k$. Write $Z\in\bR^{n\times K}$ be the membership matrix such that $Z_{ik}=1$ if $\sigma_i=k$ and $Z_{ik}=0$ otherwise. Then, we have
\[
\bE\{A\} = P - \diag(P),\ \text{with}\ P = ZBZ^\top.
\]
In the SBM, the main research issues are model selection and community detection. The goal of model selection is to estimate the number of communities $K$. The main methods to estimate $K$ include the sequential test \citep{Lei:2016,Hu:2021} and the likelihood-based method \citep{Saldna:2017,Wang:2017,Hu:2020}. The community detection aims to cluster all nodes into different communities such that the nodes in the same community have the same link behavior. The majority of methods have also been proposed to recover the community structure, such as spectral clustering \citep{Rohe:2011,Jin:2015}, pseudo-likelihood maximization \citep{Amini:2013}, and profile-pseudo likelihood methods \citep{Wang:2023,Fu:2023-2}. However, a limitation of the SBM is that the model assumes that all nodes are stochastically equivalent. In the real network, there are some nodes with `hubs' or high-degree and some nodes with low-degree, that is, heterogeneous. To address this shortcoming, \cite{Karrer:2011} proposed the degree-corrected stochastic block model. Similar to the SBM, the DCSBM replaces the link probability $B_{\sigma_i\sigma_j}$ with $\theta_i\theta_jB_{\sigma_i\sigma_j}$, where $\theta_i$ is the degree parameter associated with node $i$. Denote $\Theta=\diag(\theta_1,\ldots,\theta_n)$, we have
\[
\bE\{A\} = P - \diag(P),\ \text{with}\ P = \Theta ZBZ^\top\Theta.
\]
For the DCSBM, the corresponding methods of statistical inference have been proposed as the extension of the SBM.

\textit{\textbf{Degree-corrected mixed membership model.}} The degree-corrected mixed membership model, proposed by \cite{Jin:2023} is also a typical network model with a community structure. Unlike the SBM and DCSBM, the DCMM model allows for the node to belong to multiple communities. Specifically, in the DCMM model, the network also has $K$ communities. Each node has a membership vector $\pi_i=(\pi_i(1),\ldots,\pi_i(K))^\top$, where $\pi_i(k)$ is the weight that node $i$ belongs to community $k$, satisfying $\sum_{k}\pi_i(k)=1$ for all $i$. Similar to the DCSBM, each node also has a degree parameter $\theta_i$ in the DCMM model. Let $B\in[0,1]^{K\times K}$ be a symmetric probability matrix. Recall that $A$ is the adjacency matrix of a network, the DCMM model assumes that $A_{ij}$ is a Bernoulli random variable with probability
\[
p_{ij} = \theta_i\pi_i^\top B\pi_j\theta_j = \theta_i\theta_j\sum_{kl}\pi_i(k)B_{kl}\pi_j(l),\quad \text{for any}\ 1\leq i< j\leq n.
\]
Write $\Theta=\diag(\theta_1,\ldots,\theta_n)$ and $\Pi=(\pi_1,\ldots,\pi_n)^\top$ be a $n\times K$ membership matrix. Then, we have
\[
\bE\{A\} = P - \diag(P),\ \text{with}\ P = \Theta\Pi B\Pi^\top\Theta.
\]

It is not difficult to see that the DCSBM is a special DCMM model when all $\pi_i$'s are degenerate (i.e., has only one nonzero entry which is equal to 1, and the other entries are zero. The corresponding node is also called as pure node), and further, when all $\theta_i$'s are equal to each other (i.e., no degree heterogeneity), the DCMM model degenerates to the general SBM. In addition, the mixed membership stochastic block model (MMSBM), proposed by \cite{Airoldi:2008}, is also a special case when $\theta_i$'s are equal to each other but $\pi_i$'s are non-degenerate). The research interest in the DCMM model mainly focuses on estimating the membership matrix $\Pi$. \cite{Jin:2023} proposed a simplex-based method. They found that each row of the SCORE normalized adjacency matrix falls in a simplex, and the simplex depends on the membership matrix. By SCORE normalizing the Laplacian matrix, \cite{Ke:2022} used the simplex-based method to estimate the membership vectors under the severe degree heterogeneity, respectively. Under the setting of no degree heterogeneity, \cite{Mao:2021} proposed a fast and provably consistent algorithm, called ``sequential projection after cleaning (SPACL)", to estimate the membership matrix. It is worth noting that the current inference methods for DCMM models are based on the simplex structure, and have been a largely under-explored domain, especially in estimating the number of communities $K$. 

\textit{\textbf{Latent space model.}} The latent space model (LSM), proposed by \cite{Hoff:2002}, is also a widely used network model. The LSM assumes that each node is mapped to a latent position $\bx_i\in\bR^d$. Conditionally on the collection of latent positions $\bm{X} = [\bx_1,\ldots,\bx_n]^\top$, the edge between nodes $i$ and $j$ is Bernoulli random variable with probability $p_{ij}=\kappa(\bx_i,\bx_j)$, where $\kappa(\cdot,\cdot)$ is a symmetric kernel function. The two most commonly used kernel functions are inner product functions $\kappa(\bx,\bm{y})=\bx^\top\bm{y}$ and generalized inner product functions $\kappa(\bx,\bm{y})=\bx^\top I_{a,b}\bm{y}$, where $a+b=d$ and $I_{a,b}$ for integers $a \geq 1$ and $b\geq0$ is a diagonal matrix with $a$ ``1" followed by $b$ ``$-1$". Then, these two kernel functions correspond to the random dot product graph (RDPG) \citep{Nickel:2008} and its generalised version (GRDPG) \citep{Rubin:2022}, respectively. To estimate the latent positions, \cite{Sussman:2014} proposed an adjacency spectral embedding (ASE) method using the eigenvectors associated with the top $d$ eigenvalues of the adjacency matrix. However, \cite{Xie:2020} pointed out that the ASE method formulates the problem in a low-rank matrix factorization manner, but it neglects the Bernoulli likelihood information present in the sampling model. Hence, \cite{Xie:2023} proposed an effective one-step procedure to estimate the latent positions. In addition, the issue of the hypothesis test has received considerable attention, that is, determining whether or not two nodes $i$ and $j$ in an LSM have the same latent positions \citep{Du:2023}. Under the framework of DCMM model, let $\bx_i=\sum_{k}\pi_i(k)v_i$ for $i=1,\ldots,n$ by choosing some $v_1,\ldots,v_K\in\bR^d$ for some $d = \mathrm{rank}(B) \leq K$ such that $v_k^\top I_{a,b}v_l=B_{kl}$, for all $k,l\in\{1,\ldots,K\}$ where $a$ is the number of positive eigenvalues of $B$ and $b=d-a$. Then, the LSM degenerates to the MMSBM.

Here, we have introduced some commonly used network models. It is not difficult to see that the difference in different models is that the probability matrices $P$ have different structures, and there is an inclusion relationship between the different models. Hence, in the network data analysis, the core problem is fitting the network to an appropriate model and estimating the corresponding parameters. Given a random network $\mathcal{G}$, identifying which model is suitable for a network is an interesting research issue. Intuitively, if one fits the network data to an incorrect model, then we can not correctly infer the statistical properties of the network. In this article, we first establish a spectral property of the normalized adjacency matrix and provide a goodness-of-fit test algorithm of models.

\subsection{A spectral-based statistic}

In the network analysis, hypothesis testing mainly focuses on the SBM and its variants, especially in testing the structure of communities. For an adjacency matrix $A$ of SBM, the normalized adjacency matrix $\bar{A}$ is defined as follows:
\[
\bar{A}_{ij}=
\begin{dcases}
	\dfrac{A_{ij}-p_{ij}}{\sqrt{p_{ij}(1-p_{ij})}} & i\neq j, \\
	0 & i=j,
\end{dcases}
\]
where $p_{ij}=\bE\{A_{ij}\}$. The majority of statistics are based on this normalized adjacency matrix. \cite{Lei:2016} showed that the extreme eigenvalues of the matrix $(n-1)^{-1/2}\bar{A}$ asymptotically follows the Tracy-Widom distribution with index 1. Similarly, \cite{Wu:2024} showed the trace of the matrix $(n^{-1/2}\bar{A})^3$  asymptotically follows the normal distribution. Correspondingly, the empirically normalized adjacency matrix, i.e., the $p_{ij}$'s are replaced by its estimates $\hat{p}_{ij}$'s, also have identical limiting distribution. Under these results, they implement the test $H_0:K=K_0$ under the framework of SBM. Further, by sequential testing, the number of communities can be estimated.

It is not hard to see that the basic idea is to use an accuracy probability matrix estimator to normalize the adjacency matrix. Then, the corresponding limiting properties are established. However, the existing method mainly focused on a given model. Here, we consider extending the results to the network from the general model.

Naturally, for an adjacency matrix $A$ from network $\mathcal{G}$, the normalized adjacency matrix is
\begin{equation}\label{eq:spec1}
	\tilde{A}^*_{ij} = \begin{dcases}\dfrac{A_{ij}-p_{ij}}{\sqrt{np_{ij}(1-p_{ij})}}, & i\neq j,\\ 0, & i=j,\end{dcases}
\end{equation}
where $p_{ij}=\bE\{A_{ij}\}$ for all $1\leq i\neq j\leq n$. Then $\tilde{A}^*$ is a generalized Wigner matrix satisfying $\bE(\tilde{A}^*_{ij})=0$ and $\var(\tilde{A}^*_{ij})=1/n$ for all $1\leq i\neq j\leq n$. Combining results in \cite{Bai:2016} and \cite{Wang:2021} we have
\begin{equation}\label{eq:limit1}
	\dfrac{1}{\sqrt{6}}\tr((\tilde{A}^{*})^3)\rightsquigarrow N(0,1).
\end{equation}
We formally state and prove this result as Lemma \ref{lemma:1} in the Appendix.

Notice that the matrix $\tilde{A}^*$ involves unknown parameters $p_{ij}$'s. Hence, we can consider a natural estimate of $\tilde{A}^*$ by plugging in the estimated parameters. Let $\hat{p}_{ij}$ be an estimate of $p_{ij}$. Then, the estimates $\hat{p}_{ij}$'s lead to the empirically normalized adjacency matrix $\tilde{A}$:
\begin{equation}\label{eq:spec2}
	\tilde{A}_{ij} = \begin{dcases}\dfrac{A_{ij}-\hat{p}_{ij}}{\sqrt{n\hat{p}_{ij}(1-\hat{p}_{ij})}}, & i\neq j,\\ 0, & i=j.\end{dcases}
\end{equation}
It is natural to conjecture that when the estimates $\hat{p}_{ij}$'s are accurate enough, the convergence in \eqref{eq:limit1} will still hold for $\tilde{A}$. To obtain the asymptotic result of $\tilde{A}$, we first make the following assumptions:

\begin{assumption}\label{ass:1}
Let $\hat{p}_{ij}$ be the estimator of $p_{ij}$ for all $1\leq i,j\leq n$. Denote matrix $\Delta^{\prime}=[\Delta^{\prime}_{ij}]_{n\times n}$, where $\Delta_{ij}^{\prime}=\dfrac{p_{ij}-\hat{p}_{ij}}{\sqrt{n p_{ij}(1-p_{ij})}}$ for $i\neq j$ and $\Delta_{ii}^{\prime}=0$. The difference between $p_{ij}$ and $\hat{p}_{ij}$ satisfies
	\begin{enumerate}
		\item[(i)] $\max_{ij}\vert\hat{p}_{ij}-p_{ij}\vert=o_p(n^{-1/4})$; 
		\item[(ii)] $\tr((\Delta^{\prime})^3)=o_p(1)$.
	\end{enumerate}
\end{assumption}

Assumption \ref{ass:1} gives some restrictions for the estimators $\hat{p}_{ij}$'s. In statistical learning, the model parameter can be accurately estimated based on an appropriate model, and poor models will lead to significant deviations in the estimator of corresponding parameters. Since the results are established on the general network model, and the true model is not specified, we only require the estimators $\hat{p}_{ij}$'s to be accurate enough. These conditions are extremely mild. For example, under SBMs with balanced community structure and the true number of communities, the standard large deviation inequality suggests the $\max_{kl}|B_{kl}-\hat{B}_{kl}|=o_p(K\log n/n)$, which implies $\max_{ij}|p_{ij}-\hat{p}_{ij}|=o_p(K\log n/n)$. Further, we also have $\tr((\Delta^{\prime})^3)=o_p(K^3n^{-3/2}\log^3n)$. Hence, as long as $K=O(\sqrt{n}/\log n)$, the conditions hold under the framework of the SBM. For the $\beta$-model, \cite{Chatterjee:2011} shows that, if $L_n =o(\log(\log n))$, then $\max_i|\hat{\beta}_{i}-\beta_{i}|=O_p(n^{-1/2}\log^{-1/2}n)$, which implies $\max_{ij}|\hat{p}_{ij}-p_{ij}|=O_p(n^{-1/2}\log^{-1/2}n)$. However, it is difficult to verify the conditions (iii) since the technical and complex dependency among $\hat{p}_{ij}$'s. By simulation study, we set $\beta_i=iL_n/n$ for all $1\leq i\leq j$, and Table \ref{tab:trace} shows that the values of $\tr((\Delta^{\prime})^3)$ under the different settings. As shown in Table \ref{tab:trace}, the values of $\tr((\Delta^{\prime})^3)$ are smaller and smaller with the sample size increasing. Hence, we can assert that $\tr((\Delta^{\prime})^3)$ tends to 0.

\begin{table}[h]
\setlength{\abovecaptionskip}{0cm}  
\setlength{\belowcaptionskip}{0.5cm} 
\centering
\caption{The values of $\tr((\Delta^{\prime})^3)$ under the $\beta$-model.}
\label{tab:trace}
\begin{tabular*}{\textwidth}{c@{\extracolsep{\fill}}ccccc}
\toprule
& $L_n=0$ & $L_n=(\log(\log n))^{1/3}$ & $L_n=\log(\log n)$ &  $L_n=(\log n)^{1/2}$ \\ \midrule
$n=200$ & $2\times10^{-4}$ & $-4\times10^{-3}$ & $-3\times10^{-3}$ & $-9\times10^{-3}$ \\
$n=600$ & $3\times10^{-5}$ & $-9\times10^{-5}$ & $-3\times10^{-4}$ & $-1\times10^{-3}$ \\
$n=1000$ & $-2\times10^{-5}$ & $-5\times10^{-5}$ & $-2\times10^{-4}$ & $-8\times10^{-4}$ \\\bottomrule
\end{tabular*}
\end{table}

Formally, we  give the following theorem:

\begin{theorem}\label{thm:null}
	Let $A$ be an adjacency matrix generated from a network model. Let $\tilde{A}$ be given as in \eqref{eq:spec2} using estimators $\hat{p}_{ij}$'s. Suppose that Assumption \ref{ass:1} holds. Then, we have the following result:
	\begin{equation}\label{eq:limit2}
		T_n:=\dfrac{1}{\sqrt{6}}\tr(\tilde{A}^{3})\rightsquigarrow N(0,1),
	\end{equation}
where ``$\rightsquigarrow$" denotes convergence in distribution.
\end{theorem}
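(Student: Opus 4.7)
The plan is to compare $\tilde A$ against the oracle matrix $\tilde A^*$ from \eqref{eq:spec1}, for which Lemma \ref{lemma:1} already supplies $\tr((\tilde A^*)^3)/\sqrt 6 \rightsquigarrow N(0,1)$. Setting $E := \tilde A - \tilde A^*$, and using that both matrices are symmetric with zero diagonal, one expands
\[
\tr(\tilde A^3) = \tr((\tilde A^*)^3) + 3\tr((\tilde A^*)^2 E) + 3\tr(\tilde A^* E^2) + \tr(E^3),
\]
so by Slutsky's theorem it suffices to prove that each of the three correction traces is $o_p(1)$.

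A preliminary step is to compare $E$ to the matrix $\Delta'$ appearing in Assumption \ref{ass:1}. Decomposing
\[
E_{ij} = -\frac{\hat p_{ij}-p_{ij}}{\sqrt{n\hat p_{ij}(1-\hat p_{ij})}} + (A_{ij}-p_{ij})\Bigl[\frac{1}{\sqrt{n\hat p_{ij}(1-\hat p_{ij})}}-\frac{1}{\sqrt{np_{ij}(1-p_{ij})}}\Bigr]
\]
and Taylor-expanding $q\mapsto(q(1-q))^{-1/2}$ about $q=p_{ij}$, the uniform bound $\max_{ij}|\hat p_{ij}-p_{ij}|=o_p(n^{-1/4})$ from Assumption \ref{ass:1}(i) yields $E=\Delta'+R$ with $\max_{ij}|E_{ij}|=o_p(n^{-3/4})$ and $R$ a lower-order remainder. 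Hence $\tr(E^3) = \tr((\Delta')^3) + o_p(1)$, and the leading piece is $o_p(1)$ by Assumption \ref{ass:1}(ii).

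The main obstacle is the pair of mixed traces $\tr((\tilde A^*)^2 E)$ and $\tr(\tilde A^* E^2)$: a naive Cauchy--Schwarz bound gives only $o_p(n^{3/4})$, since $\|\tilde A^*\|_F=O_p(\sqrt n)$ while $\|E\|_F=o_p(n^{1/4})$, which is far from the target $o_p(1)$. To close this gap I would invoke the spectral structure of the generalized Wigner matrix $\tilde A^*$: with the eigen-decomposition $\tilde A^* = \sum_k \lambda_k u_k u_k^\top$, where $\|\tilde A^*\|_{\text{op}}=O_p(1)$ and $\sum_k\lambda_k^2=O_p(n)$ via the semicircle law, one rewrites
\[
\tr((\tilde A^*)^2 E) = \sum_k \lambda_k^2\, u_k^\top E u_k,
\]
and bounds the quadratic forms by splitting $E = \Delta' + R$ and using (a) Wigner eigenvector delocalization together with the entrywise bound on $R$, and (b) a second-moment computation exploiting the mean-zero, zero-diagonal structure of $\tilde A^*$ and the mild coupling between $\hat p$ and the oracle entries. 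The symmetric trace $\tr(\tilde A^* E^2)$ is handled in the same spirit, extracting the diagonal of $E^2$ first. Once the three perturbation traces are shown to be $o_p(1)$, Lemma \ref{lemma:1} and Slutsky's theorem deliver $T_n\rightsquigarrow N(0,1)$.
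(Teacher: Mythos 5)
Your skeleton (compare with the oracle matrix $\tilde A^*$, expand the cube, invoke Lemma \ref{lemma:1} and Slutsky) is the same as the paper's, but two of your steps have genuine gaps. First, folding the replacement of the denominator into the additive error $E=\tilde A-\tilde A^*$ makes the remainder $R=E-\Delta^{\prime}$ \emph{not} lower order: its dominant part is $(A_{ij}-p_{ij})\bigl[(n\hat p_{ij}(1-\hat p_{ij}))^{-1/2}-(np_{ij}(1-p_{ij}))^{-1/2}\bigr]=\tilde A^*_{ij}\cdot o_p(n^{-1/4})$, so $\max_{ij}|R_{ij}|$ is of the same order $o_p(n^{-3/4})$ as $\max_{ij}|\Delta^{\prime}_{ij}|$, and $R$ is strongly correlated with $\tilde A^*$. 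Hence neither $\tr(E^3)=\tr((\Delta^{\prime})^3)+o_p(1)$ nor the control of the mixed traces follows from the entrywise bounds you state, and Assumption \ref{ass:1}(ii) speaks only about $\Delta^{\prime}$, not about $E$. The paper avoids this by separating the two effects: it first works with $\tilde A^{\prime}_{ij}=(A_{ij}-\hat p_{ij})/\sqrt{np_{ij}(1-p_{ij})}$, so that the additive perturbation is exactly $\Delta^{\prime}$, and only afterwards absorbs the change of denominator through the uniform multiplicative factor $\tilde A=(1+o_p(n^{-1/4}))\tilde A^{\prime}$, which leaves the limiting trace unchanged.

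Second, your route for the mixed traces is not adequate as sketched. Writing $\tr((\tilde A^*)^2E)=\sum_k\lambda_k^2\,u_k^\top Eu_k$ with the eigen-decomposition of the Wigner matrix places the full weight $\sum_k\lambda_k^2=\tr((\tilde A^*)^2)=O_p(n)$ on quadratic forms that eigenvector delocalization alone cannot make $o_p(n^{-1})$: $\Delta^{\prime}$ is not mean zero, so $u_k^\top\Delta^{\prime}u_k$ enjoys no cancellation and can be as large as the operator norm of $\Delta^{\prime}$, and a delocalization bound on $|u_k^\top E u_k|$ only yields something like $o_p(n^{1/4})$ per term. Everything therefore rests on your unspecified ``second-moment computation'', which is precisely the substance of the paper's proof: it computes the conditional mean and variance of $\tr((\tilde A^*)^2\Delta^{\prime})=\sum_{k\neq i}\Delta^{\prime}_{ki}\sum_j\tilde A^*_{jk}\tilde A^*_{ij}$ directly, obtaining $o_p(n^{-1/2})$, and for $\tr(\tilde A^*(\Delta^{\prime})^2)$ it eigen-decomposes the \emph{small} matrix $\Delta^{\prime}$ rather than $\tilde A^*$, so that the total eigenvalue weight is $\sum_i\lambda_i^2=\sum_{j,k}(\Delta^{\prime}_{jk})^2=o_p(n^{1/2})$, paired with the bound $\Gamma_i^\top\tilde A^*\Gamma_i=O_p(n^{-1/2})$ on the quadratic forms. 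Unless you carry out those moment computations (and additionally handle the term $R$, which your decomposition creates and which is correlated with $\tilde A^*$), the argument does not close.
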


\textbf{Remark 1.} Theorem \ref{thm:null} is proved in Appendix. Theorem \ref{thm:null} indicates that as long as the accuracy of estimators $p_{ij}$'s satisfy a mild condition, the trace of the third-order for the empirically normalized adjacency matrix will convergences to a normal distribution. This theorem is also a nontrivial generalization of Theorem 1 in \cite{Dong:2020} and Theorem 2 in \cite{Wu:2024}.

Using this result, we can consider implementing the goodness-of-fit of the network model. In statistical learning, for a network $A$, it is significant to determine an appropriate model to fit this network. According to Theorem \ref{thm:null}, we know that if we can obtain enough accurate estimates $\hat{p}_{ij}$'s, then the statistic $T_n$ convergences in distribution to the standard normal distribution. Specifically, we assume that the network is generated from the model $M_1$ with parameter $\Theta$. For example, we assume that the model $M_1$ is the SBM, and the parameter $\Theta = (K,\sigma,B)$. Then, a hypothesis test problem can be considered as follows:
\begin{equation}\label{eq:test}
	H_0: A\ \text{is generated from the model $M_1$}\ v.s.\ H_1: A\ \text{is generated from other models.}
\end{equation}
Then, based on the trace of the third-order for a normalized adjacency matrix, we propose a spectral statistic to test the hypothesis \eqref{eq:test}. First, based on the model $M_1$, we use a sample $A$ to estimate the parameter $\hat{\Theta}$, and obtain the estimate $\hat{P}$ of the link-probability matrix. Second, we use the estimate $\hat{P}$ to compute the empirically normalized adjacency matrix $\tilde{A}$, and obtain $T_n=\tr((\tilde{A})^3)/\sqrt{6}$. According to Theorem \ref{thm:null}, if the network $A$ is generated from the model $M_1$, the asymptotic distribution of the statistic $T_n$ is the standard normal distribution. Under the alternative hypothesis, however, inappropriate models will lead to low accuracy in parameter estimation. Moreover, the adjacency cannot be correctly normalized, which will lead to a large deviation by the normalization term. We perform simulation studies and find that the empirical distribution of $T_n$ does not deviate from the standard normal distribution under the null hypothesis. Using the above results, we can carry out the hypothesis test. Then, we have a rejection rule:
\[
\text{Reject}\ H_0,\ \text{if}\ |T_n|\geq u_{1-\alpha/2},
\]
where $u_{1-\alpha/2}$ is the upper $\alpha$-th quantile of the standard normal distribution. The corresponding hypothesis test algorithm can be seen in Algorithm \ref{algo:GoF}.

\begin{algorithm}[htbp] 
\caption{Goodness-of-fit for the network model.} 
\label{algo:GoF} 
\begin{algorithmic}[1] 
\REQUIRE Adjacency matrix $A$, the candidate model $M_1$, and the nominal level $\alpha$.
\STATE Based on the candidate model $M_1$, using the network sample $A$ to estimate the model parameter $\hat{\Theta}$.
\STATE Using $\hat{\Theta}$ to calculate $\hat{p}_{ij}$ for all $1\leq i,j\leq n$ under the framework of model $M_1$. And, compute $\tilde{A}$ using \eqref{eq:spec2}.
\STATE Calculate $T_n = \dfrac{1}{\sqrt{6}}\tr((\tilde{A})^3)$ and $p_{value} = 2\bP_{N(0,1)}\{X > |T_n|\}$.
\IF{$p_{value} > \alpha$} 
\STATE One asserts that $A$ is generated from the model $M_1$. 
\ELSE 
\STATE One asserts that $A$ is not generated from the model $M_1$.
\ENDIF  
\end{algorithmic} 
\end{algorithm}

\textbf{Remark 2.} Specifically, Algorithm \ref{algo:GoF} can be used to test the node homogeneous for $\beta$-model. In the $\beta$-model with $n$ nodes, one of the interest problems is the node homogeneous, i.e., $\beta_1=\beta_2=\cdots=\beta_r$ where $1\leq r\leq n$. For the proposed method, we can consider setting the candidate model as the $\beta$-model with $\beta_1=\cdots=\beta_r\neq\beta_{r+1}\neq\cdots\neq\beta_n$. In addition, under the homogeneous assumption with $r=n$, the $\beta$-model reduces to the E-R model. Hence, one can set the candidate model as the E-R model. Simulation shows that the proposed testing method can test the homogeneous null hypothesis.

\section{Model selection in degree-corrected mixed membership models}\label{sec:appdcmm}

In this section, we apply the proposed method to DCMM models for estimating the number of communities. As discussed in Section \ref{sec:model}, the difference between the DCMM model and DCSBM is whether the membership vector is degenerate. In the research about DCMM models, the number of communities is known. In practice, however, the number of communities is usually unknown. Hence, accurately estimating the number of communities is of great practical and theoretical significance. To the best of our knowledge, too little work is devoted to determining the number of communities in DCMM models. Compared with SBMs and DCSBMs, the prior information of the membership vector of a node is more complicated in DCMM models, which makes it difficult to use the method based on information criterion for DCMM models. Similar to \cite{Lei:2016}, based on the proposed goodness-of-fit test, we consider a sequential testing method that can be suitable for DCMM models.

In Section \ref{sec:method}, we gave a general theory for the goodness-of-fit test of network models. This method assumes that the network is generated from a candidate model $M_1$. Specifically, to estimate the number of communities, let the candidate model $M_1$ be the DCMM model with $K_0$ communities. Hence, let the network $A$ be generated by a DCMM model with $K$ communities, the hypothesis test problem can be concretized as
\begin{equation}\label{eq:testDCMM}
	H_{0,K_0}: K=K_0\ v.s.\ H_{1,K_0}: K\neq K_0,
\end{equation}
where $K$ and $K_0$ denote true and a hypothetical number of communities for DCMM models, respectively. Under this setting, we can use the method in Algorithm \ref{algo:GoF} to calculate the statistic $T_n(K_0)$. Then, if $|T_n(K_0)|>u_{1-\alpha/2}$ for a nominal level $\alpha$, we reject the null hypothesis $H_{0,K_0}$. Following this idea, given a maximum value $K_{max}$, we can compute the statistic sequence $T_{n,1},\cdots,T_{n,K_{max}}$, and $T_{n,K_0}$ should be large than $u_{1-\alpha/2}$ when $K_0\neq K$. Hence, for a given $\alpha$, the estimated number of communities is given by 
\begin{equation}\label{eq:EstK}
	\hat{K} = \min\{K_0\in\{1,\cdots,K_{max}\}: \left|T_n(K_0)\right| < u_{1-\alpha/2}\}.
\end{equation}

In DCMM models, the non-identifiability of the model is an intrinsic issue. For fixed $K$, \cite{Jin:2023} studied the identifiability, and showed that the model is identifiable when the probability matrix $B$ has unit diagonals and each community has at least one pure node, i.e., for eligible $(\Theta_1,\Pi_1,B_1)$ and $(\Theta_2,\Pi_2,B_2)$, if $\Theta_1\Pi_1B_1\Pi_1^\top\Theta_1=\Theta_2\Pi_2B_2\Pi_2^\top\Theta_2$, we have $\Theta_1=\Theta_2,\Pi_1=\Pi_2$, and $B_1=B_2$. In fact, there is a major concern for the identifiability of $K$. Let a DCMM model with $K$ communities has structure $\Theta\Pi B\Pi^\top\Theta$. However, there may exist $\tilde{K}\neq K$ and $(\tilde{\Theta},\tilde{\Pi},\tilde{B})$ such that $\tilde{\Theta}\tilde{\Pi}\tilde{B}\tilde{\Pi}^\top\tilde{\Theta}=\Theta\Pi B\Pi^\top\Theta$. Define $\mathbb{Q}(k,l)$ as a class of $k\times l$ matrix that satisfies that the sum of elements in each column is one, that is, $\mathbb{Q}(k,l) = \{Q\in\bR^{k\times l}:Q^\top\bm{1}_k=\bm{1}_l\}$. We also define $\mathbb{S}_m=\{\bm{x}\in\mathbb{R}^{m}:\sum_{i=1}^m x_i=1\}$. Then, we have the following proportion:

\begin{proposition}\label{prop:idendcmm}
Let $\bm{y}\in\mathbb{S}_l$ be a vector. Then, for any matrix $Q\in\mathbb{Q}(k,l)$, we have $\bm{x}=Q\bm{y}\in\mathbb{S}_k$.
\end{proposition}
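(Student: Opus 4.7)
The proposition reduces to a one-line matrix identity, so the plan is essentially to write out the defining conditions of $\mathbb{Q}(k,l)$ and $\mathbb{S}_m$ and chain them together. The goal is to show $\bm{1}_k^\top \bm{x} = 1$, since by definition $\bm{x}\in\mathbb{S}_k$ iff its entries sum to one, which is exactly $\bm{1}_k^\top \bm{x} = 1$.

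First I would substitute $\bm{x}=Q\bm{y}$ into $\bm{1}_k^\top \bm{x}$, obtaining $\bm{1}_k^\top Q \bm{y}$. Next, I would use the associativity of matrix multiplication to rewrite this as $(Q^\top \bm{1}_k)^\top \bm{y}$. By the defining property of $Q\in\mathbb{Q}(k,l)$, we have $Q^\top \bm{1}_k = \bm{1}_l$, so the expression becomes $\bm{1}_l^\top \bm{y}$. Finally, since $\bm{y}\in\mathbb{S}_l$, the entries of $\bm{y}$ sum to one, i.e., $\bm{1}_l^\top \bm{y} = 1$. Chaining these equalities yields $\bm{1}_k^\top \bm{x} = 1$, so $\bm{x}\in\mathbb{S}_k$.

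There is no real obstacle here: the statement is essentially a bookkeeping lemma that says the transpose of a column-stochastic map sends the probability simplex into the probability simplex. The only subtle point worth flagging is that $\mathbb{S}_m$ as defined in the paper does not impose nonnegativity of the entries, only that they sum to one, so I do not need to verify $x_i \ge 0$; if nonnegativity were required, one would additionally need $Q$ to have nonnegative entries, but this is not part of the stated definition of $\mathbb{Q}(k,l)$ and is not needed for the conclusion as written.
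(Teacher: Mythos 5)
Your proof is correct and is essentially the same argument as the paper's: the paper writes out $x_i=\sum_j q_{ij}y_j$ and sums componentwise, while you express the same computation compactly as $\bm{1}_k^\top Q\bm{y}=(Q^\top\bm{1}_k)^\top\bm{y}=\bm{1}_l^\top\bm{y}=1$. Your remark that nonnegativity is neither assumed nor needed under the paper's definitions is also accurate.
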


\begin{proof}
	Let $Q=(q_{ij})_{k\times l}$. It is easy to have
	\begin{equation*}
		\begin{dcases}
		x_1 = q_{11}y_1 + q_{12}y_2 + \cdots + q_{1l}y_l, \\
		x_2 = q_{21}y_1 + q_{22}y_2 + \cdots + q_{2l}y_l, \\
		\quad \vdots \\
		x_k = q_{k1}y_1 + q_{k2}y_2 + \cdots + q_{kl}y_l. \\
		\end{dcases}
	\end{equation*}
	Notice that, since $ \sum_{i=1}^l y_i  = 1$ and $q_{1j}+q_{2j}+\cdots+q_{kj} = 1$ for all $1\leq j\leq l$, then we have $\sum_{i=1}^k x_i = 1$. Thus, $\bm{x}\in\mathbb{S}_k$.
\end{proof}

Proposition \ref{prop:idendcmm} indicates that a membership vector with $l$ communities can be transformed into a membership vector with $k$ communities. For a $\tilde{K}\times K$ matrix $Q\in\mathbb{Q}(\tilde{K},K)$, let $\tilde{\Pi}=\Pi Q^\top\in[0,1]^{n\times\tilde{K}}$. Hence, as long as $Q^\top\tilde{B}Q=B$, we have $\Theta\tilde{\Pi}\tilde{B}\tilde{\Pi}^\top\Theta=\Theta\Pi B\Pi^\top\Theta$. Hence, the non-identifiable of $K$ results in that we need to consider the two-sided alternative for the hypothesis test \eqref{eq:testDCMM}. Due to the non-identifiability of $K$, it is not easy to prove the consistency of estimation \eqref{eq:EstK}, and the method may tend to underestimate the number of communities. In the simulation, we empirically investigate the performance of estimation, and the results show that the proposed sequential testing method can exactly estimate the number of communities in most cases.

\section{Simulation}\label{sec:simulation}
In this section, we verify the effectiveness of the proposed method through extensive simulation studies. In the $\beta$-model setting, the MLE in \cite{Chatterjee:2011} is used to estimate the parameters $\beta_i$. A disadvantage is that this estimation procedure will not work when the network is sparse. In the SBM and DCSBM settings, we apply the corrected Bayesian information criterion (CBIC) proposed by \cite{Hu:2020} and spectral clustering on ratios-of-eigenvectors (SCORE) proposed by \cite{Jin:2015} to estimate the number of communities and the community label. In the LSM settings, the adjacency spectral embedding method, proposed by \cite{Sussman:2014}, is used to estimate the latent position. All simulations were performed on a PC with a single processor of 2.3 GHz 8‐Core Intel Core i9.

\subsection{The null distribution}\label{sec:null}

In the simulation, we examine the finite sample null distribution of the proposed test statistic. Here, we consider the five basic network models: E-R model, $\beta$-model, stochastic block model, degree-corrected stochastic block model, and latent space model. For all models, we set $n=500$ and $1000$. Other parameter settings for different models are as follows:

(1) \textit{E-R model}: Let $p = 0.01, 0.05$, and $0.1$;

(2) \textit{$\beta$-model}: Let $\beta_i=iL_n/n$, where $L_n=0,(\log(\log n))^{1/2}, (\log n)^{1/2}$; 

(3) \textit{Stochastic block model}: Let the number of communities as $K=3$ and the edge probability between communities $u$ and $v$ as $B_{uv}=\rho(1+4\times I[u=v])$, where $\rho$ measures the sparsity of network. The community label $\sigma_i$'s are drawn independently from the multinomial distribution with parameter $\pi=(1/3,1/3,1/3)^\top$. We consider the cases of $\rho = 0.02, 0.05$, and $0.1$;

(4) \textit{Degree-corrected stochastic block model}: The community label $\sigma$ and probability matrix $B$ are generated the same way as for the stochastic block model. In addition, following the method in \cite{Zhao:2012}, we generate the degree-corrected parameters. Specifically, 
\[
\theta_i=\begin{dcases}u_i & \text{w.p.}\ 0.8, \\ 9/11 & \text{w.p.}\ 0.1, \\ 13/11 & \text{w.p.}\ 0.1,\end{dcases}
\]
where $u_i\sim\mathrm{Unif}[4/5,6/5]$. We also consider the cases of $\rho = 0.02, 0.05$, and $0.1$;

(5) \textit{Latent space model}: For the latent space model, we consider the case of the random dot product graph with latent dimension $d=1$. The latent position $\bm{x}_{0i}$ for the $i$th node is set to $\bm{x}_{0i}=0.8\cdot\sin\{\pi(i-1)/(n-1)\}+0.1$, where $1\leq i\leq n$.  Let $\bm{X}=\rho\bm{X}_0 = \rho[\bm{x}_{01},\ldots,\bm{x}_{0n}]^\top$, where $\rho = 0.2, 0.5$ and $1$.

We plot the normal Q-Q plot of the statistic from 1000 data replications. Figures \ref{fig:qq-er} - \ref{fig:qq-lsm} show the results for the Q-Q plot under the different null models. It is easy to see that for the different null models, the statistic $T_n$ convergences in distribution to the standard normal distribution. Compared with other test methods of network data, such as the largest singular value \citep{Lei:2016} and the maximum entry-wise deviation \citep{Hu:2021}, the proposed test statistic is not necessary to consider the bootstrap correction, and improve the test efficiency. The results visually demonstrate the results in Theorem \ref{thm:null}.

\begin{figure}[htbp]
	\centering
	\vspace{-0.35cm}
	\setlength{\abovecaptionskip}{-2pt}
	\subfigtopskip=2pt
	\subfigbottomskip=2pt
	\subfigcapskip=-5pt
	\subfigure[$p=0.01$]{\label{fig:er-sub1.1}
	\includegraphics[width=0.23\linewidth]{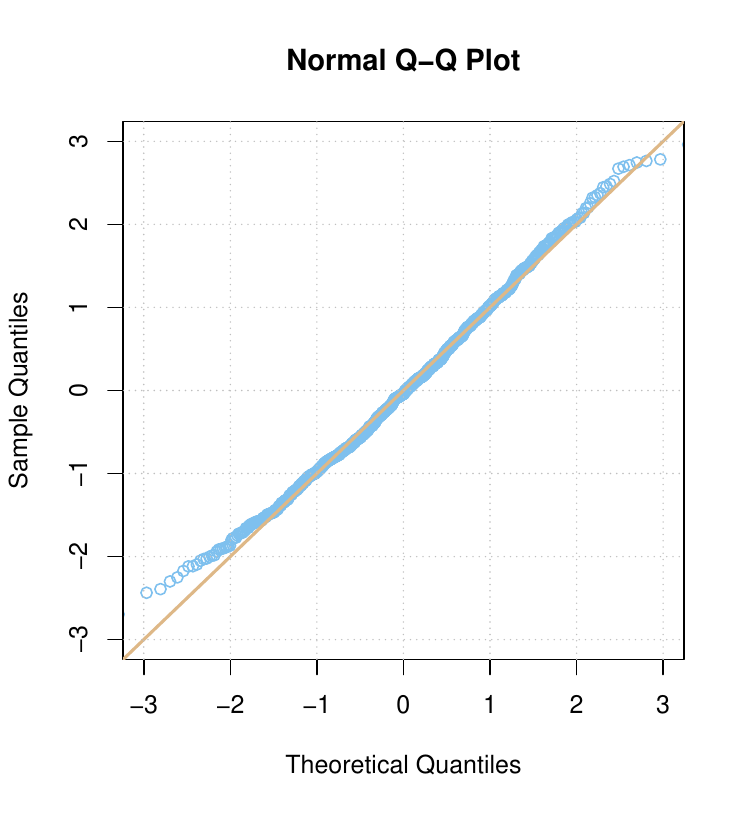}}
	\subfigure[$p=0.05$]{\label{fig:er-sub1.2}
	\includegraphics[width=0.23\linewidth]{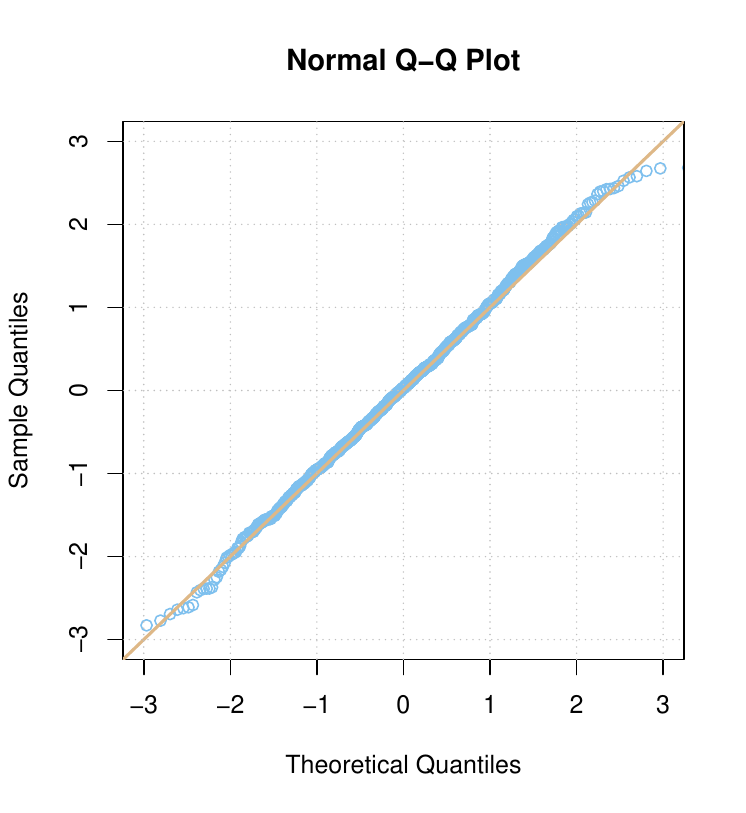}}
	\subfigure[$p=0.1$]{\label{fig:er-sub1.3}
	\includegraphics[width=0.23\linewidth]{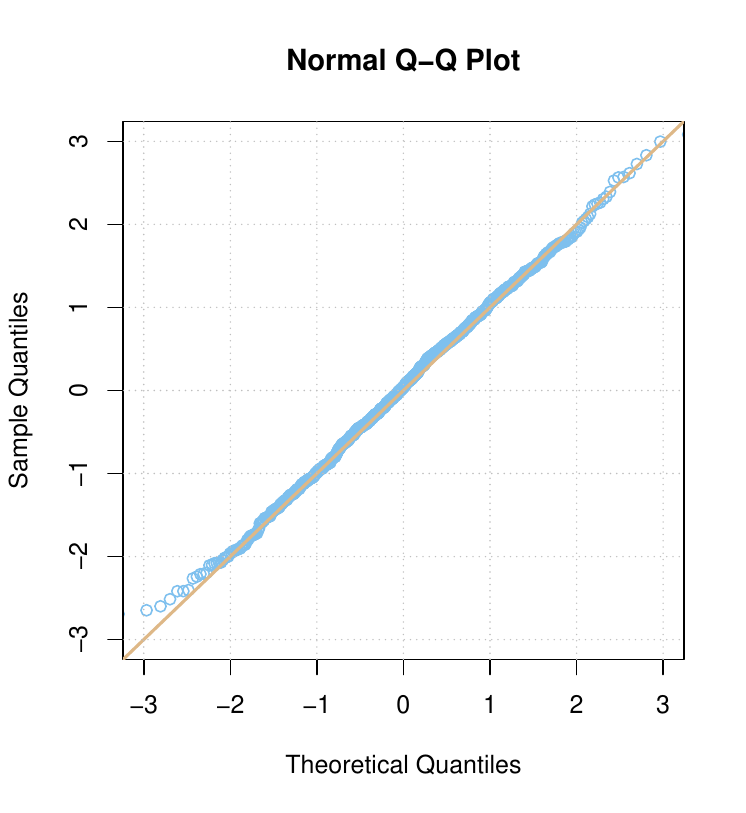}}
	
	\subfigure[$p=0.01$]{\label{fig:er-sub2.1}
	\includegraphics[width=0.23\linewidth]{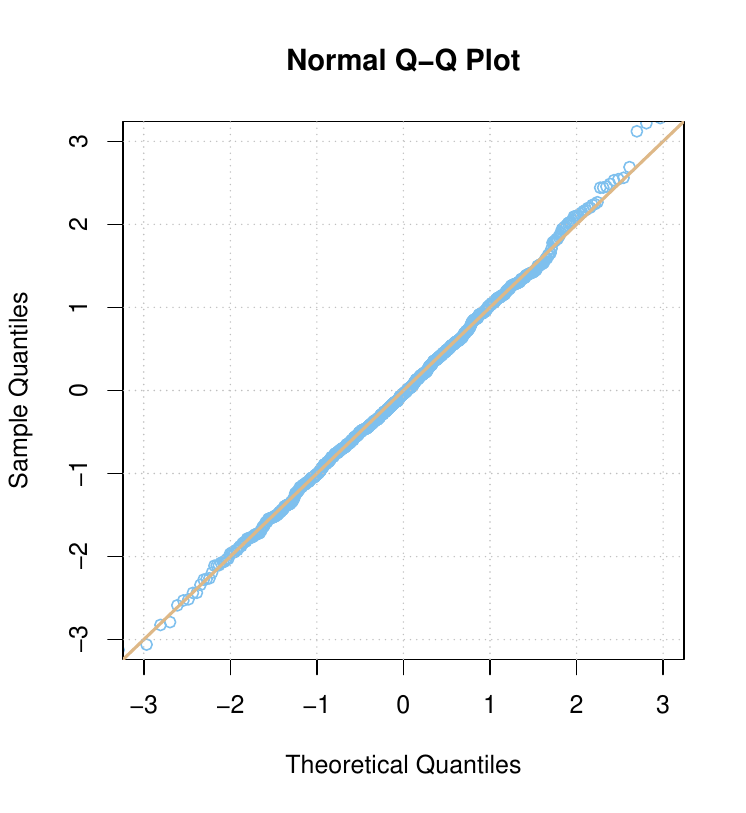}}
	\subfigure[$p=0.05$]{\label{fig:er-sub2.2}
	\includegraphics[width=0.23\linewidth]{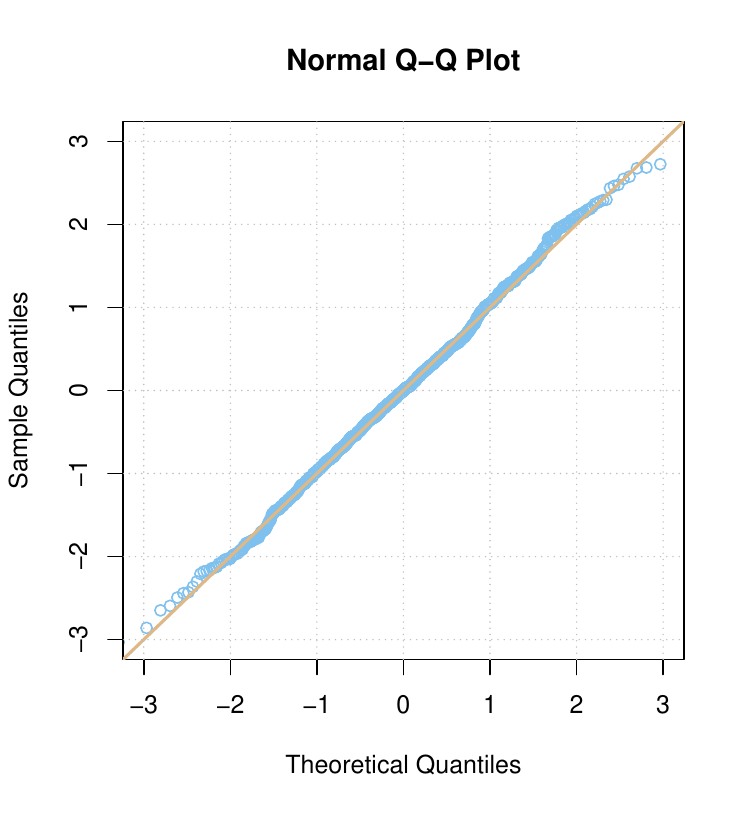}}
	\subfigure[$p=0.1$]{\label{fig:er-sub2.3}
	\includegraphics[width=0.23\linewidth]{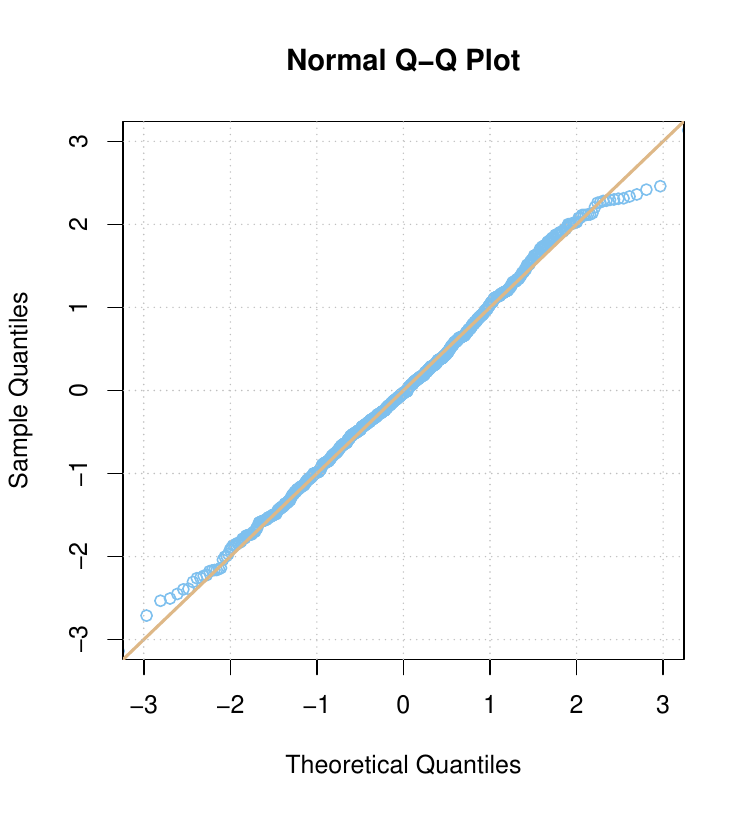}}
	\vspace{0.5cm}
	\caption{Normal Q-Q plot under the E-R model when $n=500$ (upper row) and $n=1000$ (lower row).}
	\label{fig:qq-er}
\end{figure}

\begin{figure}[htbp]
	\centering
	\vspace{-0.35cm}
	\setlength{\abovecaptionskip}{-2pt}
	\subfigtopskip=2pt
	\subfigbottomskip=2pt
	\subfigcapskip=-5pt
	\subfigure[$L_n=0$]{\label{fig:beta-sub1.1}
	\includegraphics[width=0.23\linewidth]{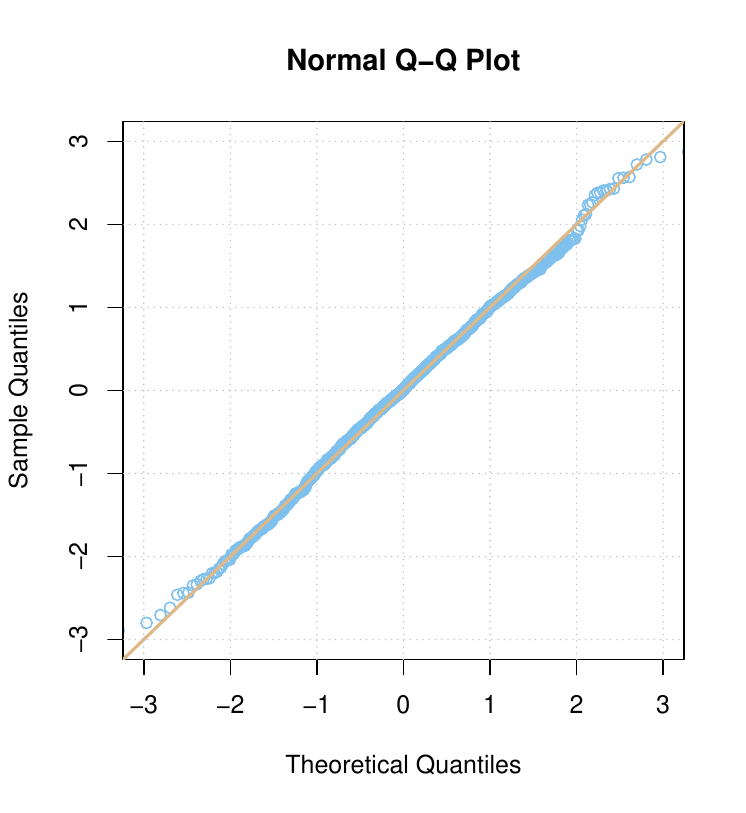}}
	\subfigure[$L_n=(\log(\log n))^{1/2}$]{\label{fig:beta-sub1.2}
	\includegraphics[width=0.23\linewidth]{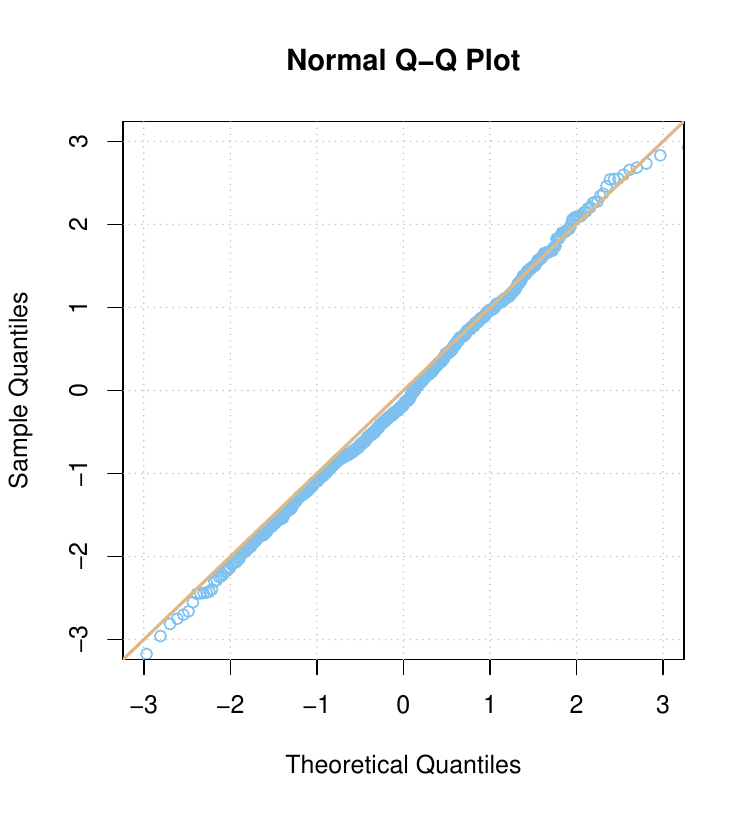}}
	\subfigure[$L_n=(\log n)^{1/2}$]{\label{fig:beta-sub1.4}
	\includegraphics[width=0.23\linewidth]{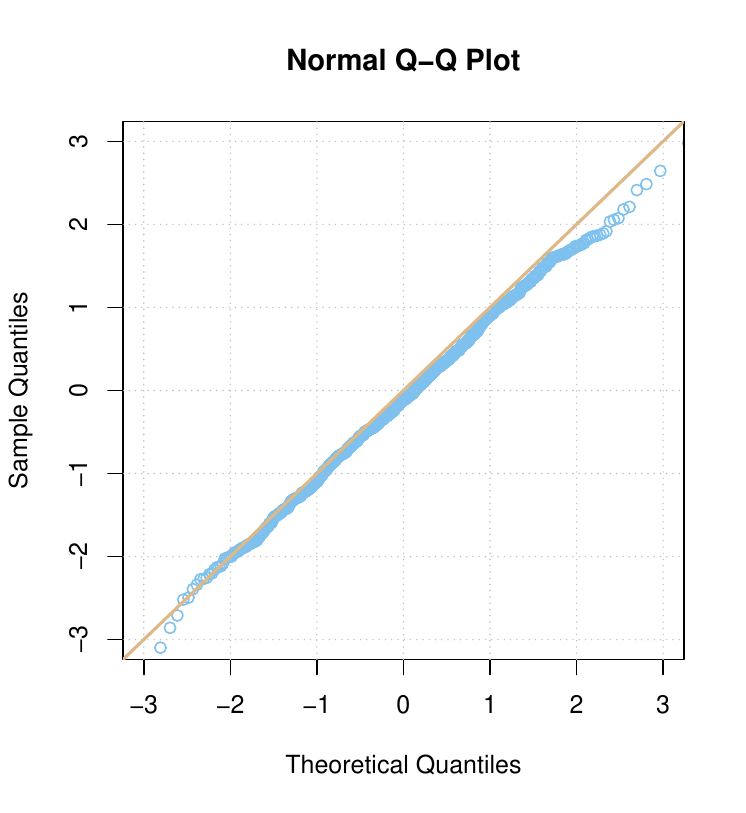}}
	
	\subfigure[$L_n=0$]{\label{fig:beta-sub2.1}
	\includegraphics[width=0.23\linewidth]{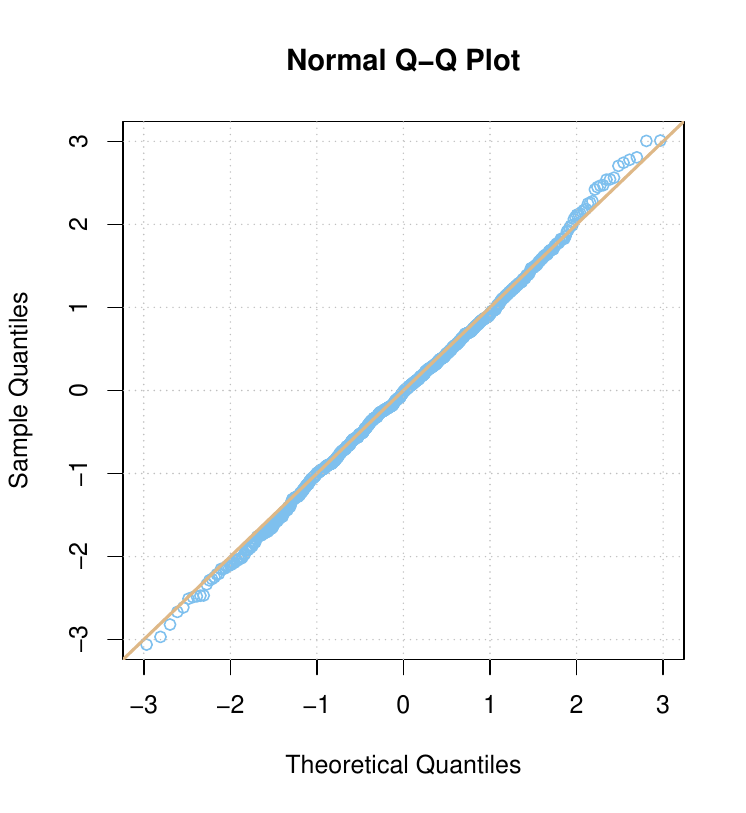}}
	\subfigure[$L_n=(\log(\log n))^{1/2}$]{\label{fig:beta-sub2.2}
	\includegraphics[width=0.23\linewidth]{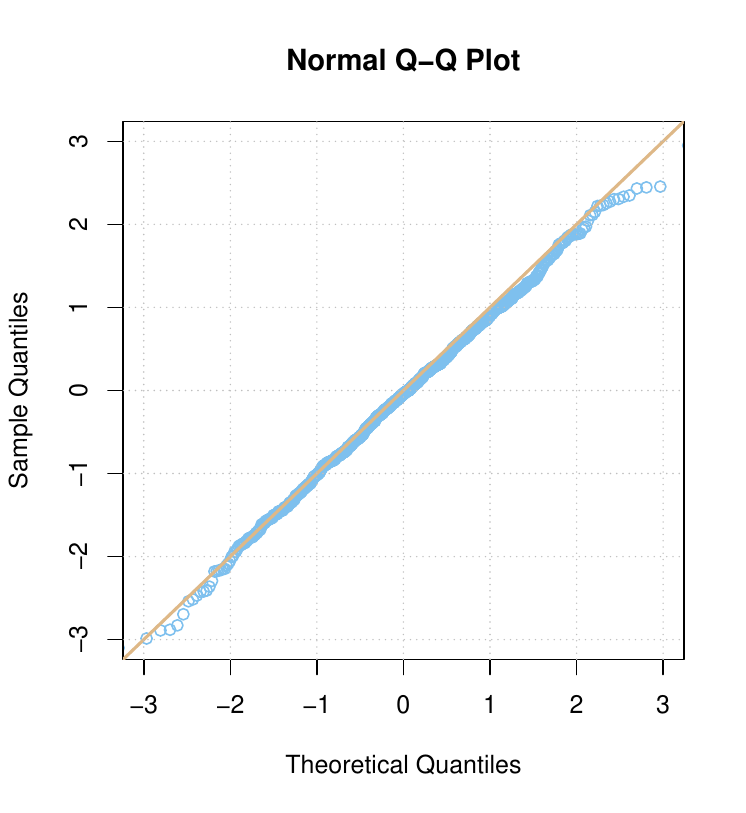}}
	\subfigure[$L_n=(\log n)^{1/2}$]{\label{fig:beta-sub2.4}
	\includegraphics[width=0.23\linewidth]{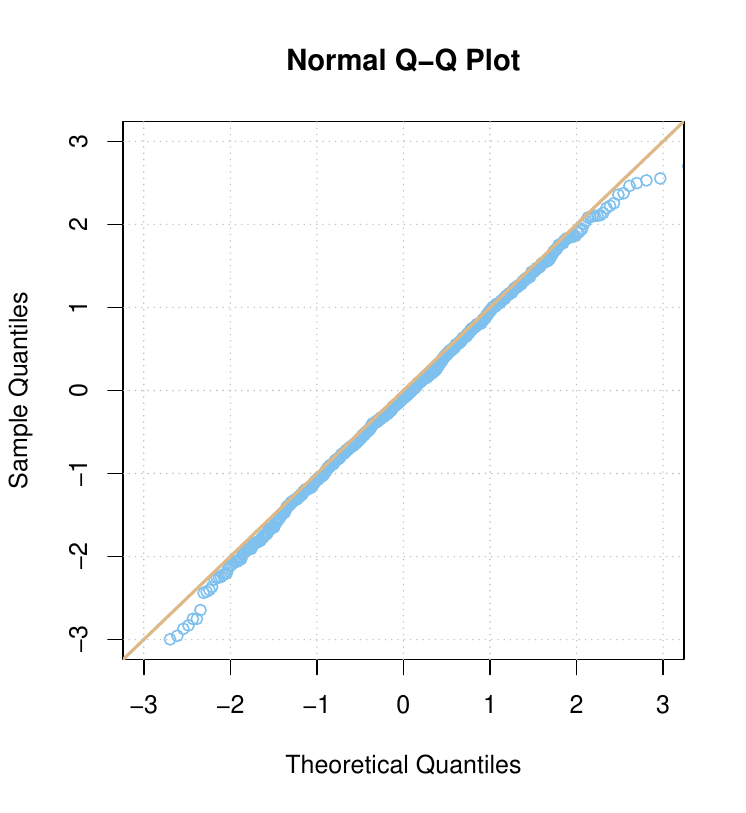}}
	\vspace{0.5cm}
	\caption{Normal Q-Q plot under the $\beta$-model when $n=500$ (upper row) and $n=1000$ (lower row).}
	\label{fig:qq-beta}
\end{figure}

\begin{figure}[htbp]
	\centering
	\vspace{-0.35cm}
	\setlength{\abovecaptionskip}{-2pt}
	\subfigtopskip=2pt
	\subfigbottomskip=2pt
	\subfigcapskip=-5pt
	\subfigure[$\rho=0.02$]{\label{fig:sbm-sub1.1}
	\includegraphics[width=0.23\linewidth]{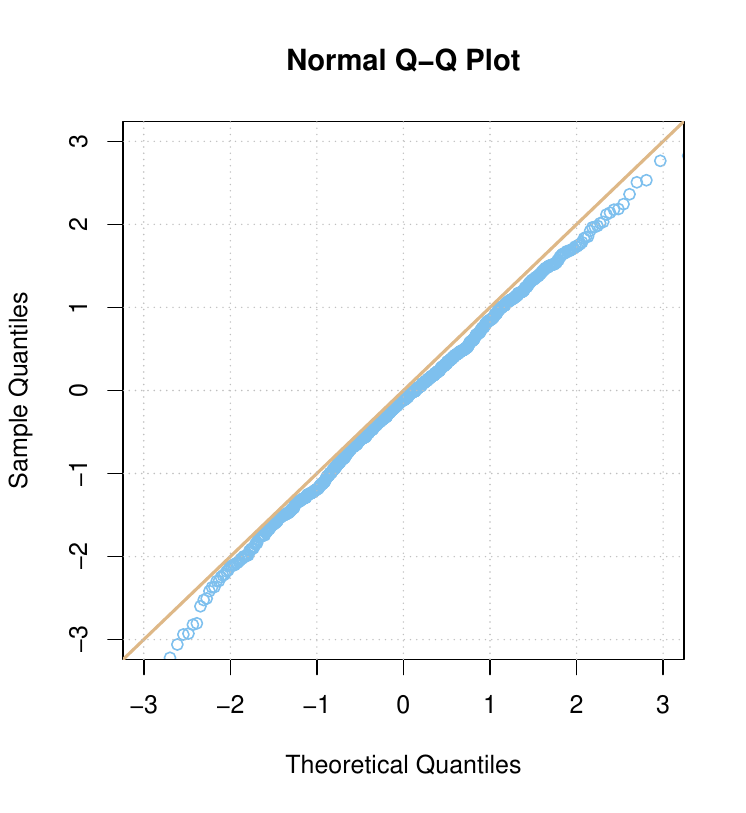}}
	\subfigure[$\rho=0.05$]{\label{fig:sbm-sub1.2}
	\includegraphics[width=0.23\linewidth]{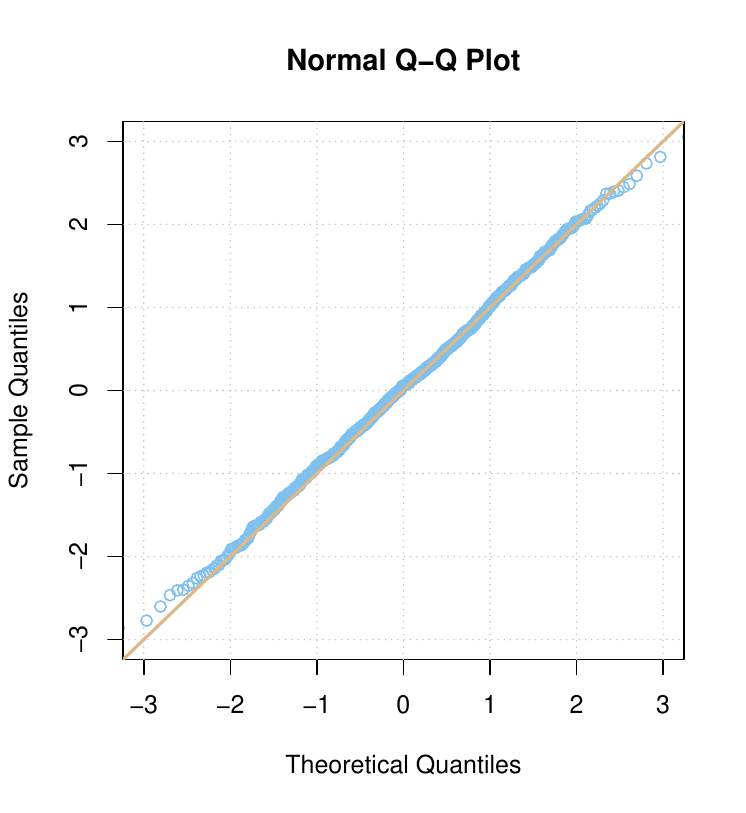}}
	\subfigure[$\rho=0.1$]{\label{fig:sbm-sub1.3}
	\includegraphics[width=0.23\linewidth]{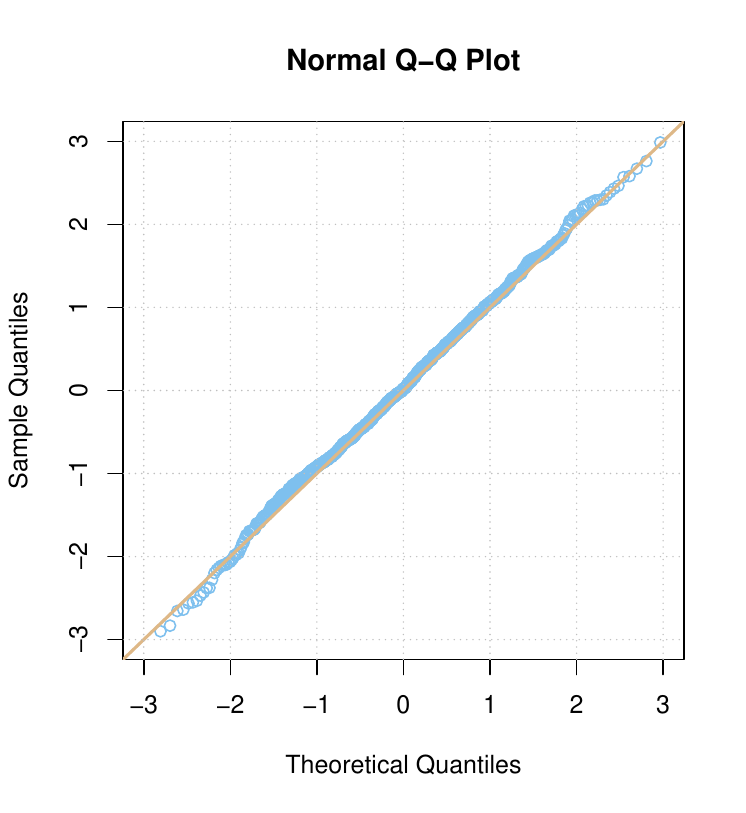}}
	
	\subfigure[$\rho=0.02$]{\label{fig:sbm-sub2.1}
	\includegraphics[width=0.23\linewidth]{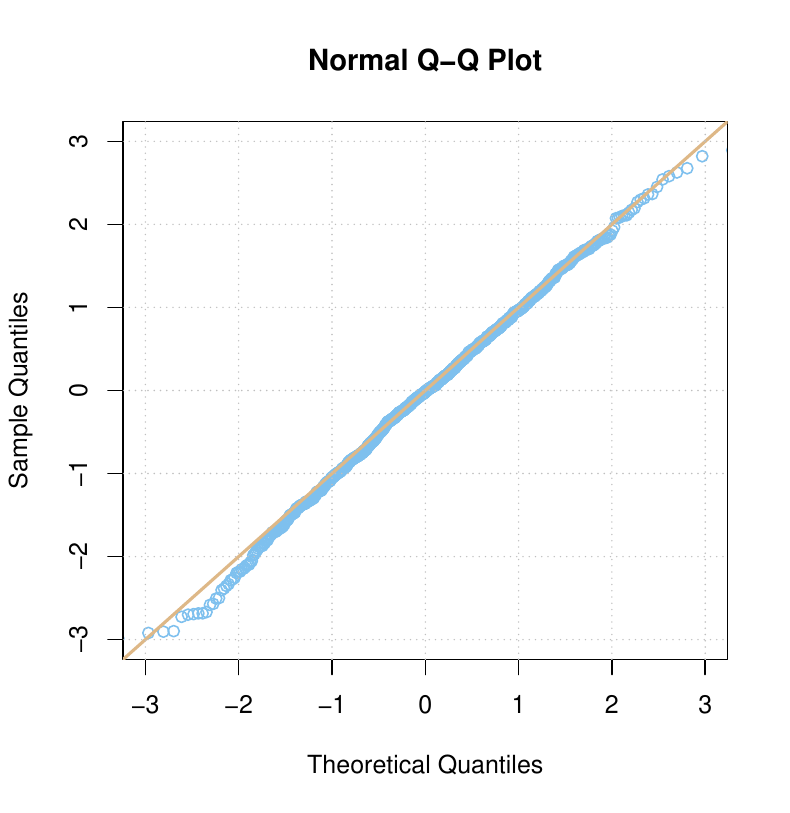}}
	\subfigure[$\rho=0.05$]{\label{fig:sbm-sub2.2}
	\includegraphics[width=0.23\linewidth]{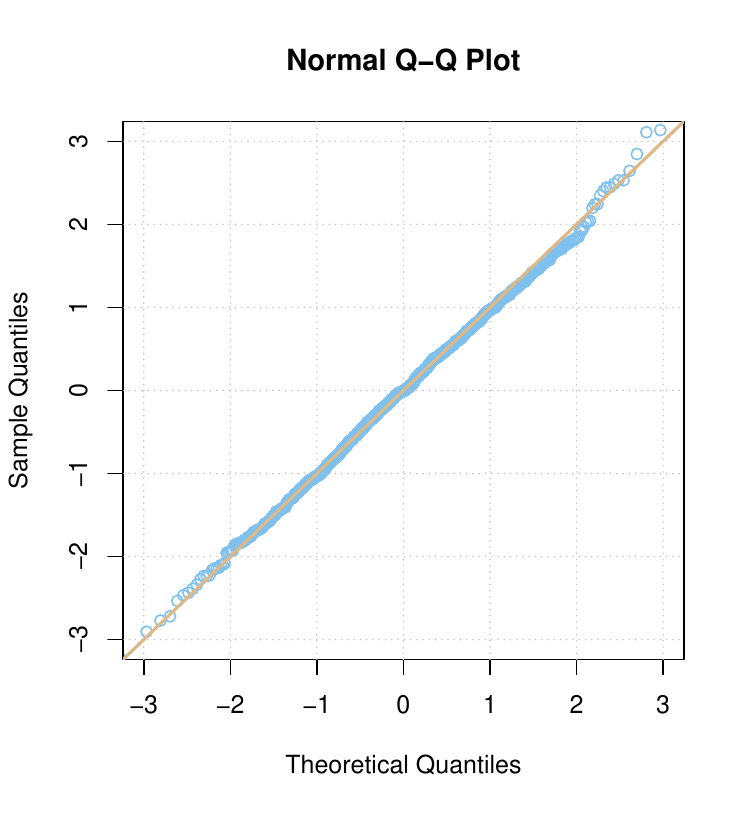}}
	\subfigure[$\rho=0.1$]{\label{fig:sbm-sub2.3}
	\includegraphics[width=0.23\linewidth]{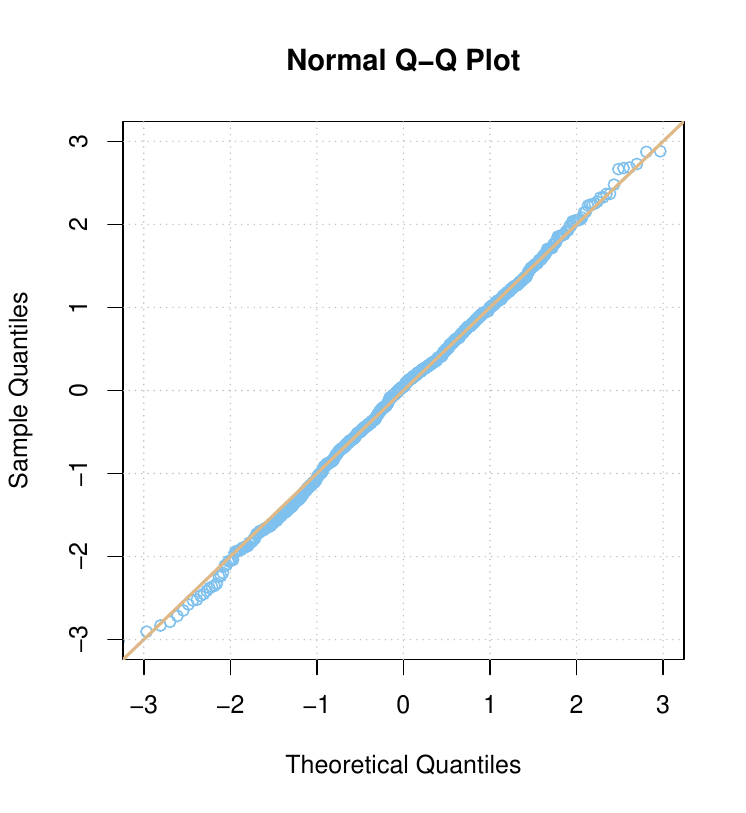}}
	\vspace{0.5cm}
	\caption{Normal Q-Q plot under the SBM when $n=500$ (upper row) and $n=1000$ (lower row).}
	\label{fig:qq-sbm}
\end{figure}

\begin{figure}[htbp]
	\centering
	\vspace{-0.35cm}
	\setlength{\abovecaptionskip}{-2pt}
	\subfigtopskip=2pt
	\subfigbottomskip=2pt
	\subfigcapskip=-5pt
	\subfigure[$\rho=0.02$]{\label{fig:dcsbm-sub1.1}
	\includegraphics[width=0.23\linewidth]{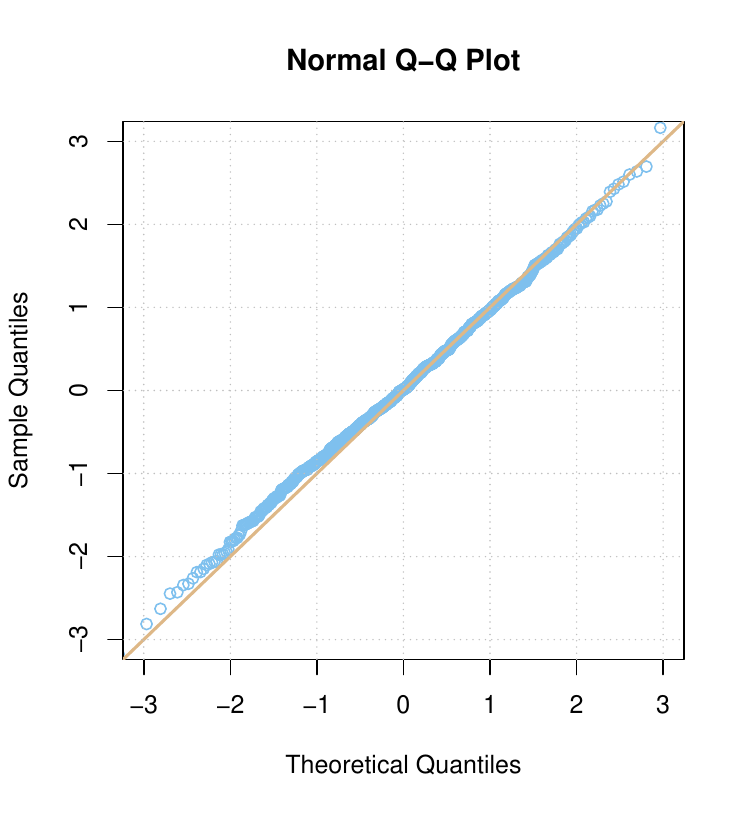}}
	\subfigure[$\rho=0.05$]{\label{fig:dcsbm-sub1.2}
	\includegraphics[width=0.23\linewidth]{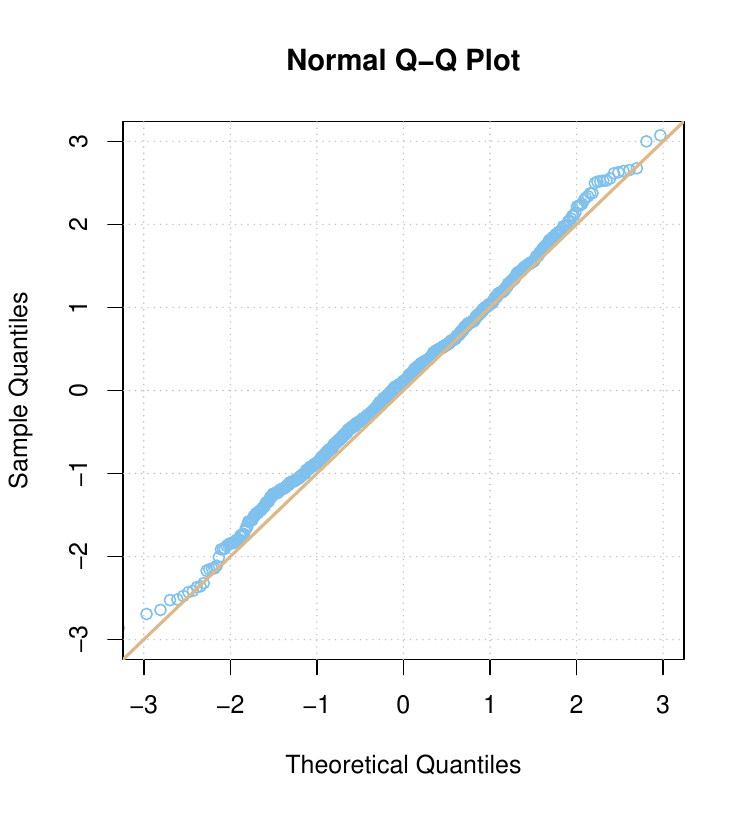}}
	\subfigure[$\rho=0.1$]{\label{fig:dcsbm-sub1.3}
	\includegraphics[width=0.23\linewidth]{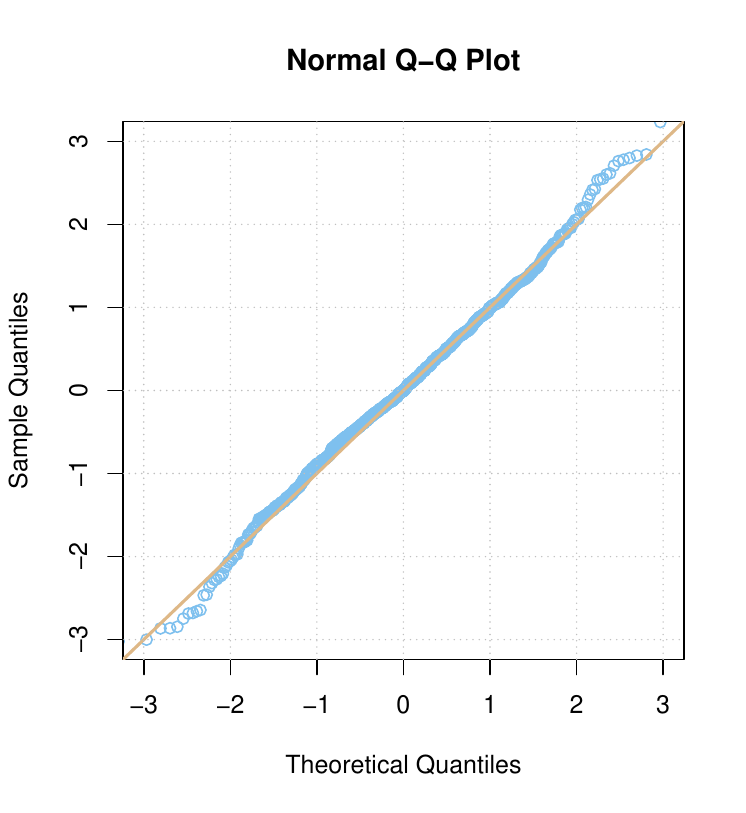}}
	
	\subfigure[$\rho=0.02$]{\label{fig:dcsbm-sub2.1}
	\includegraphics[width=0.23\linewidth]{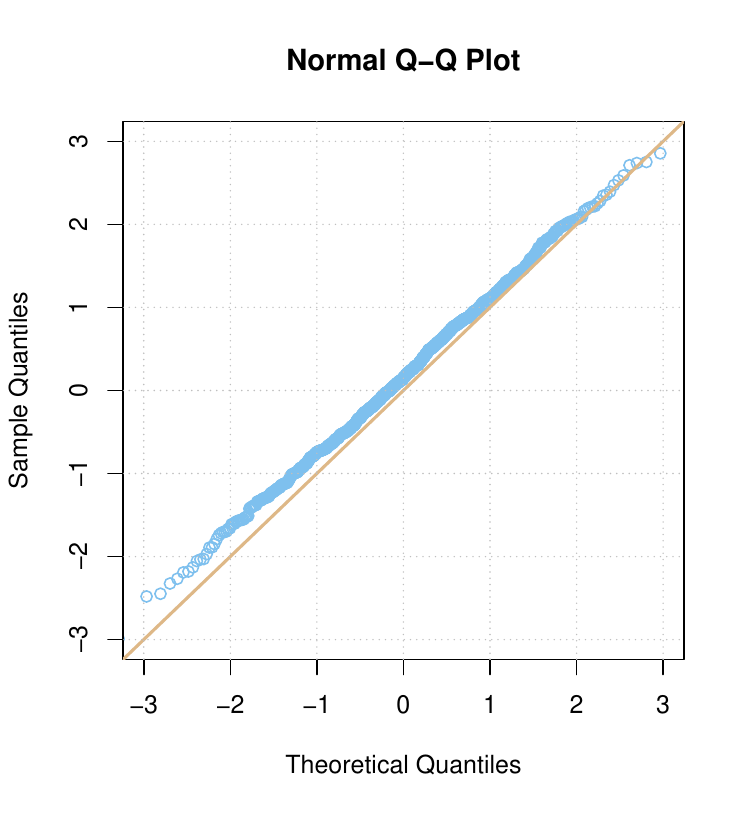}}
	\subfigure[$\rho=0.05$]{\label{fig:dcsbm-sub2.2}
	\includegraphics[width=0.23\linewidth]{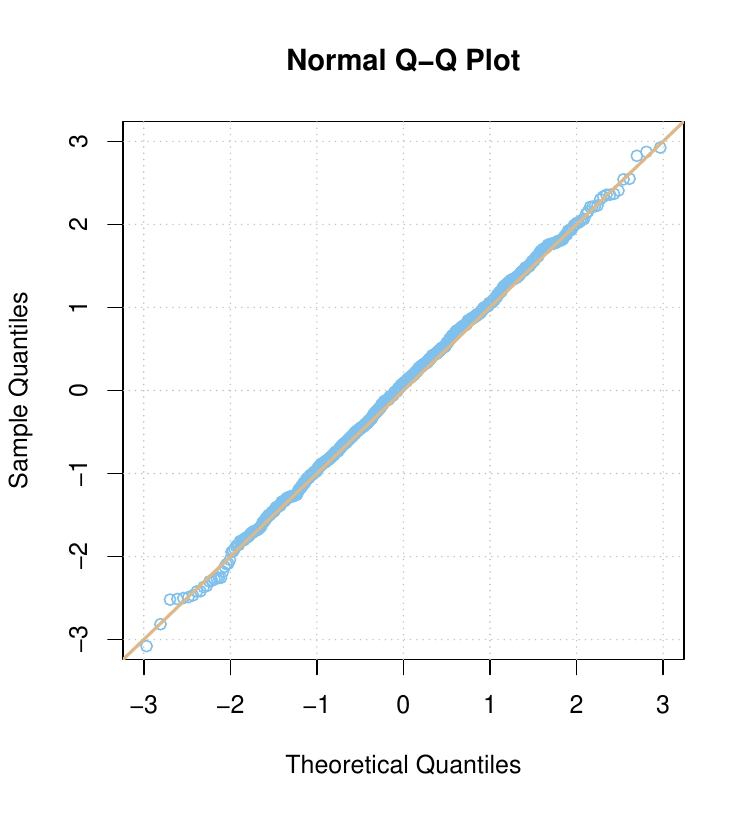}}
	\subfigure[$\rho=0.1$]{\label{fig:dcsbm-sub2.3}
	\includegraphics[width=0.23\linewidth]{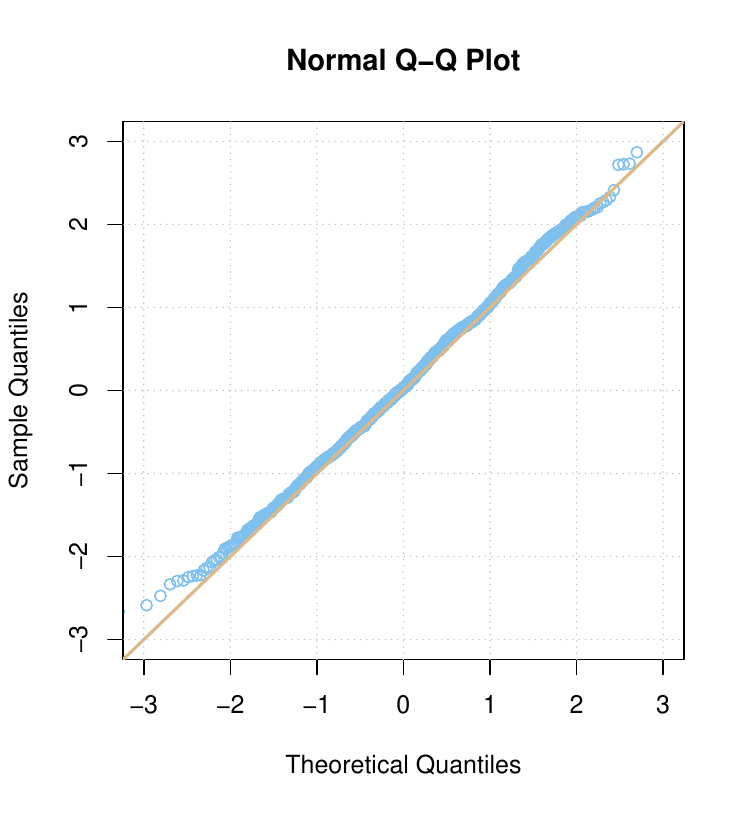}}
	\vspace{0.5cm}
	\caption{Normal Q-Q plot under the DCSBM when $n=500$ (upper row) and $n=1000$ (lower row).}
	\label{fig:qq-dcsbm}
\end{figure}

\begin{figure}[htbp]
	\centering
	\vspace{-0.35cm}
	\setlength{\abovecaptionskip}{-2pt}
	\subfigtopskip=2pt
	\subfigbottomskip=2pt
	\subfigcapskip=-5pt
	\subfigure[$\rho=0.2$]{\label{fig:lsm-sub1.1}
	\includegraphics[width=0.23\linewidth]{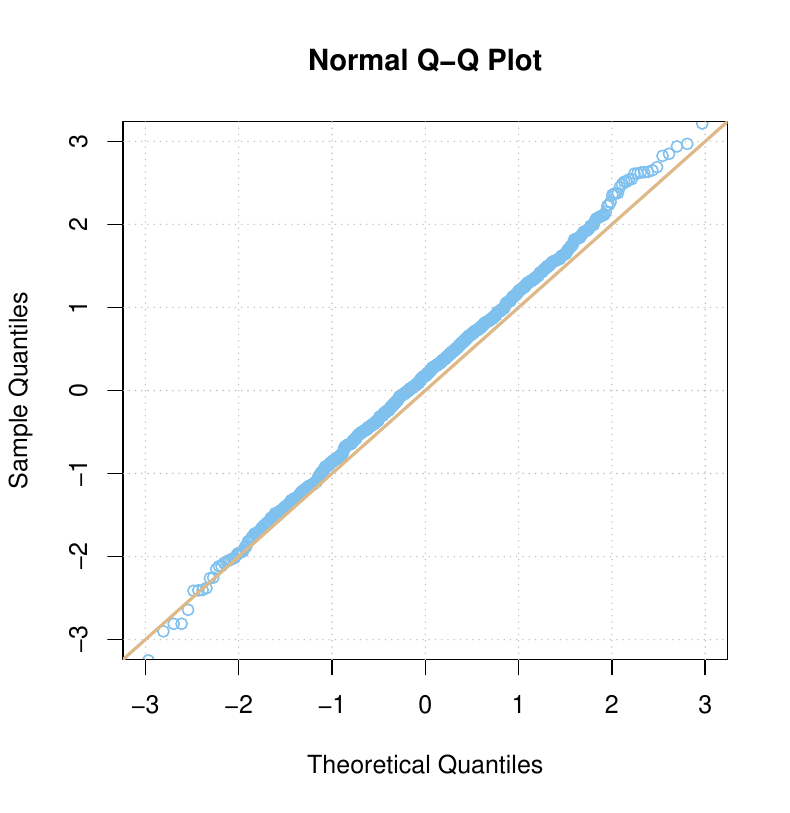}}
	\subfigure[$\rho=0.5$]{\label{fig:lsm-sub1.2}
	\includegraphics[width=0.23\linewidth]{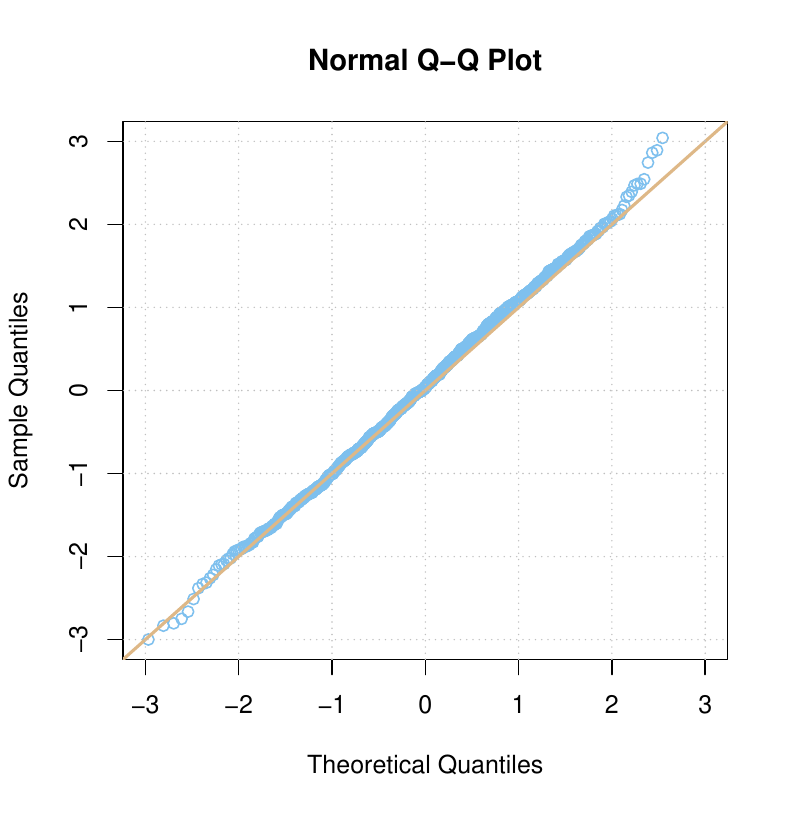}}
	\subfigure[$\rho=1$]{\label{fig:lsm-sub1.3}
	\includegraphics[width=0.23\linewidth]{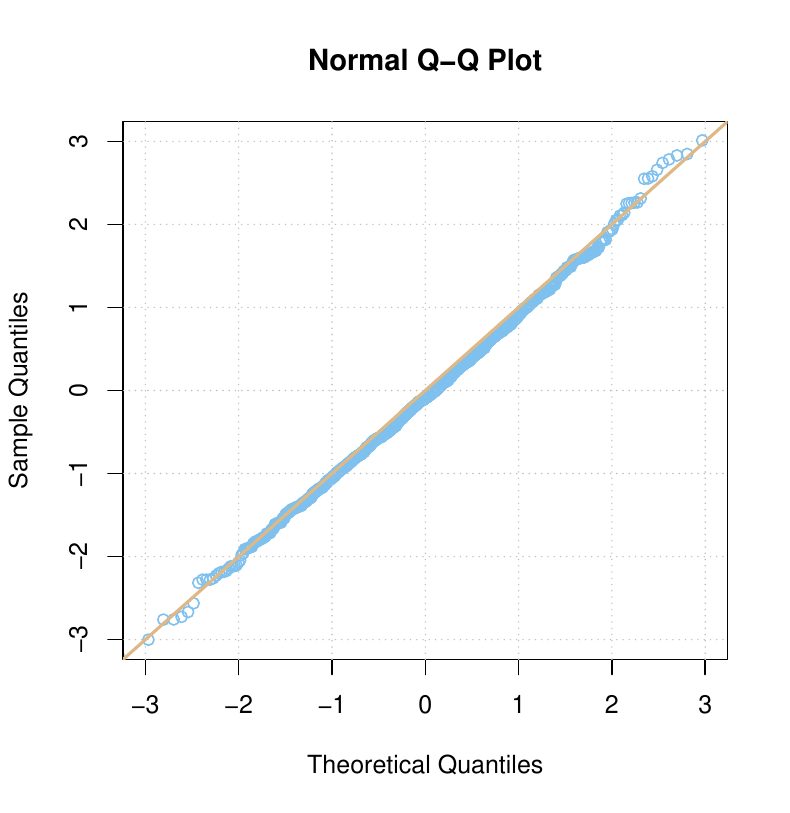}}
	
	\subfigure[$\rho=0.2$]{\label{fig:lsm-sub2.1}
	\includegraphics[width=0.23\linewidth]{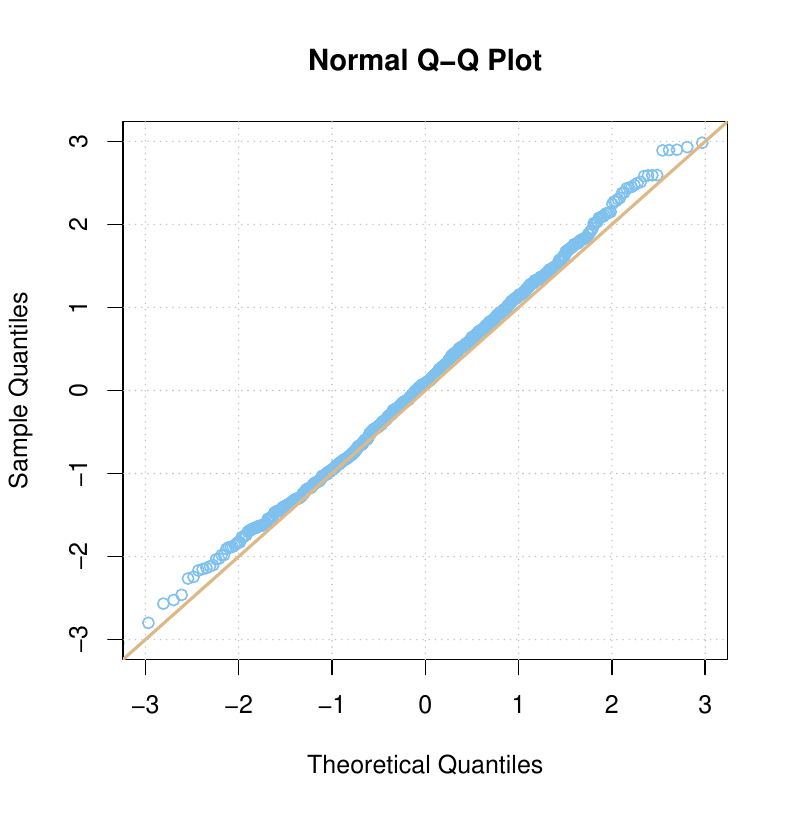}}
	\subfigure[$\rho=0.5$]{\label{fig:lsm-sub2.2}
	\includegraphics[width=0.23\linewidth]{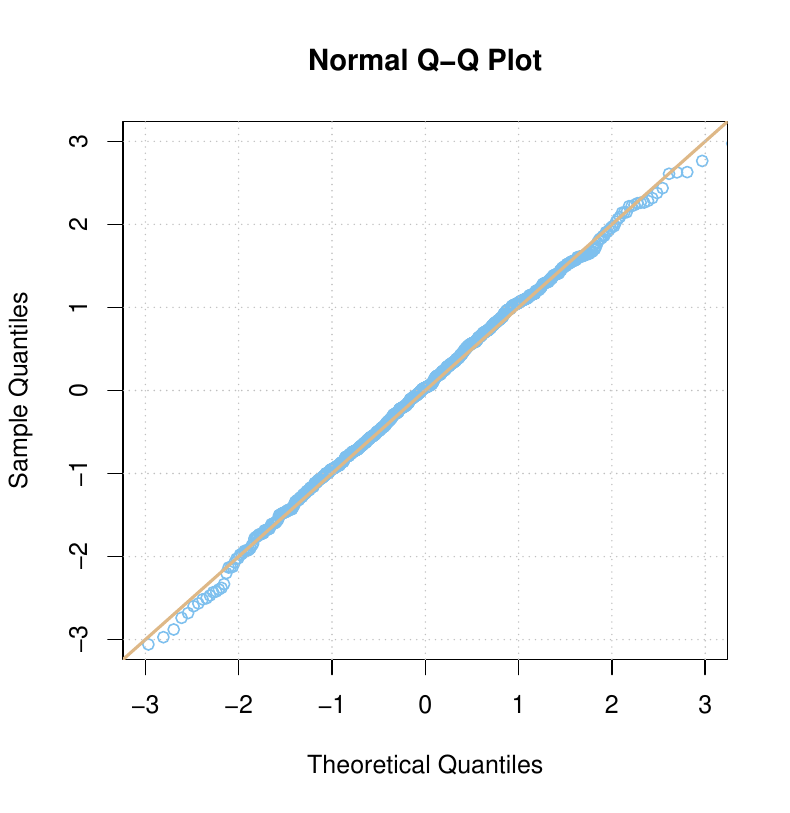}}
	\subfigure[$\rho=1$]{\label{fig:lsm-sub2.3}
	\includegraphics[width=0.23\linewidth]{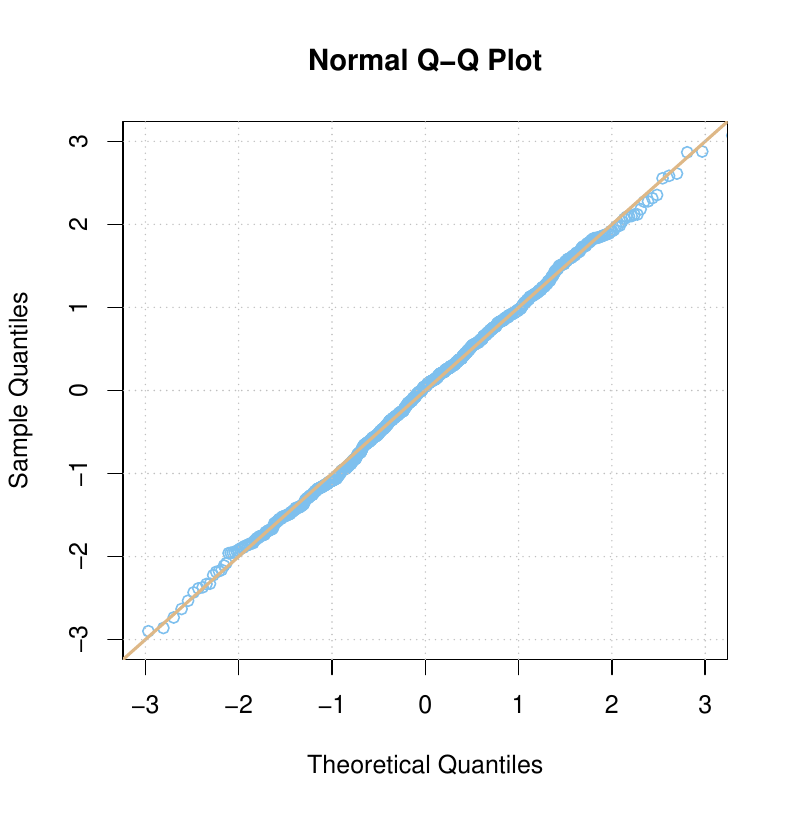}}
	\vspace{0.5cm}
	\caption{Normal Q-Q plot under the LSM when $n=500$ (upper row) and $n=1000$ (lower row).}
	\label{fig:qq-lsm}
\end{figure}

\subsection{The empirical size}

In this subsection, we consider the empirical size. The models and parameter settings are similar to that in Section \ref{sec:null}. Tables \ref{tab:er-size} - \ref{tab:lsm-size} report the results from 200 data replications. From Tables \ref{tab:er-size} - \ref{tab:lsm-size}, for all settings, $T_n$'s Type I errors are close to the nominal level $0.05$. At the same time, it is worth noting that as the sample size increases, the empirical size of the statistic is gradually becoming accurate. The results are consistent with the results in Section \ref{sec:null}.

\begin{table}[htbp]
\setlength{\abovecaptionskip}{0cm}  
\setlength{\belowcaptionskip}{0.5cm} 
\centering
\caption{Empirical size at nominal level $\alpha = 0.05$ for hypothesis test $H_0$ under the E-R model}
\label{tab:er-size}
\begin{tabular*}{\textwidth}{c@{\extracolsep{\fill}}ccc}
\toprule
 & $\rho=0.01$ & $\rho=0.05$ & $\rho=0.1$ \\ \midrule
$n=200$ & 0.01 & 0.03 & 0.07 \\
$n=400$ & 0.06 & 0.05 & 0.04 \\
$n=600$ & 0.03 & 0.05 & 0.05 \\
$n=800$ & 0.06 & 0.06 & 0.05 \\
$n=1000$ & 0.07 & 0.07 & 0.03 \\ \bottomrule
\end{tabular*}
\end{table}

\begin{table}[htbp]
\setlength{\abovecaptionskip}{0cm}  
\setlength{\belowcaptionskip}{0.5cm} 
\centering
\caption{Empirical size at nominal level $\alpha = 0.05$ for hypothesis test $H_0$ under the $\beta$-model}
\label{tab:beta-size}
\begin{tabular*}{\textwidth}{c@{\extracolsep{\fill}}ccc}
\toprule
 & $L_n=0$ & $L_n=(\log(\log n))^{1/3}$ &  $L_n=(\log n)^{1/2}$ \\ \midrule
$n=200$ & 0.03 & 0.06 & 0.04 \\
$n=400$ & 0.03 & 0.05 & 0.05 \\
$n=600$ & 0.06 & 0.06 & 0.04 \\
$n=800$ & 0.07 & 0.05 & 0.03 \\
$n=1000$ & 0.05 & 0.05 & 0.07 \\ \bottomrule
\end{tabular*}
\end{table}

\begin{table}[htbp]
\setlength{\abovecaptionskip}{0cm}  
\setlength{\belowcaptionskip}{0.5cm} 
\centering
\caption{Empirical size at nominal level $\alpha = 0.05$ for hypothesis test $H_0$ under the SBM}
\label{tab:sbm-size}
\begin{tabular*}{\textwidth}{c@{\extracolsep{\fill}}ccc}
\toprule
 & $\rho=0.02$ & $\rho=0.05$ & $\rho=0.1$ \\ \midrule
$n=200$ & 0.68 & 0.05 & 0.03 \\
$n=400$ & 0.06 & 0.03 & 0.04 \\
$n=600$ & 0.05 & 0.04 & 0.06 \\
$n=800$ & 0.05 & 0.06 & 0.06 \\
$n=1000$ & 0.07 & 0.04 & 0.09 \\ \bottomrule
\end{tabular*}
\end{table}

\begin{table}[htbp]
\setlength{\abovecaptionskip}{0cm}  
\setlength{\belowcaptionskip}{0.5cm} 
\centering
\caption{Empirical size at nominal level $\alpha = 0.05$ for hypothesis test $H_0$ under the DCSBM}
\label{tab:dcsbm-size}
\begin{tabular*}{\textwidth}{c@{\extracolsep{\fill}}ccc}
\toprule
 & $\rho=0.02$ & $\rho=0.05$ & $\rho=0.1$ \\ \midrule
$n=200$ & 0.56 & 0.05 & 0.05 \\
$n=400$ & 0.06 & 0.02 & 0.06 \\
$n=600$ & 0.04 & 0.03 & 0.06 \\
$n=800$ & 0.05 & 0.06 & 0.04 \\
$n=1000$ & 0.06 & 0.05 & 0.03 \\ \bottomrule
\end{tabular*}
\end{table}

\begin{table}[htbp]
\setlength{\abovecaptionskip}{0cm}  
\setlength{\belowcaptionskip}{0.5cm} 
\centering
\caption{Empirical size at nominal level $\alpha = 0.05$ for hypothesis test $H_0$ under the LSM}
\label{tab:lsm-size}
\begin{tabular*}{\textwidth}{c@{\extracolsep{\fill}}ccc}
\toprule
 & $\rho=0.2$ & $\rho=0.5$ & $\rho=1$ \\ \midrule
$n=200$ & 0.06 & 0.05 & 0.06 \\
$n=400$ & 0.06 & 0.02 & 0.07 \\
$n=600$ & 0.05 & 0.05 & 0.05 \\
$n=800$ & 0.04 & 0.07 & 0.07 \\
$n=1000$ & 0.03 & 0.02 & 0.04 \\ \bottomrule
\end{tabular*}
\end{table}

\subsection{The empirical power}

In this section, we investigate the empirical power of the proposed test procedure. We consider the following cases:

(i) The true network $A$ is generated from the $\beta$ model with $\beta_i=iL_n/n$ for $1\leq i\leq n$. However, the candidate models $M_1$ are chosen as the E-R model, SBM, and DCSBM.

(ii) The true network $A$ is generated from the SBM with a balanced community and $B_{uv}=\rho(1+4\times I[u=v])$. The candidate models $M_1$ are chosen as the E-R model, $\beta$-model, and LSM with $d=1$.

(iii) The true network $A$ is generated from the DCSBM with a balanced community and $B_{uv}=\rho(1+4\times I[u=v])$. The degree parameters are generated by the method in Section \ref{sec:null}. The candidate models $M_1$ are chosen as the E-R model and LSM with $d=1$ and $d=2$.

(iv) The true network $A$ is generated from the LSM with $d=1$. The candidate models $M_1$ are chosen as the E-R model, SBM, LSM with $d=2$.

\begin{table}[htbp]
\begin{threeparttable}
\setlength{\abovecaptionskip}{0cm}  
\setlength{\belowcaptionskip}{0.5cm} 
\centering
\caption{Empirical power at nominal level $\alpha = 0.05$ for the setting (i).}
\label{tab:power1}
\begin{tabular*}{\textwidth}{c@{\extracolsep{\fill}}cccccccccccc}
\toprule
&& \multicolumn{3}{c}{E-R model} & & \multicolumn{3}{c}{SBM} &  & \multicolumn{3}{c}{DCSBM} \\ \cline{3-5} \cline{7-9} \cline{11-13} 
& $L_n$ & I & II & III &  & I & II & III &  & I & II & III \\ \midrule
$n=200$ && 0.59 & 0.97 & 1 &  & 0.53 & 0.67 & 0.41 &  & 0.63 & 0.96 & 0.92 \\
$n=400$ && 1 & 1 & 1 &  & 0.86 & 0.59 & 0.84 &  & 0.78 & 1 & 1 \\
$n=600$ && 1 & 1 & 1 &  & 0.88 & 0.77 & 0.99 &  & 0.87 & 0.99 & 0.99 \\
$n=800$ && 1 & 1 & 1 &  & 0.90 & 0.98 & 1 &  & 0.84 & 1 & 1 \\
$n=1000$ && 1 & 1 & 1 &  & 0.94 & 1 & 1 &  & 0.63 & 1 & 1 \\ \bottomrule
\end{tabular*}
\begin{tablenotes}[para]
   $^*$ On the second line of Table, I, II, and III indicate $L_n=0$, $L_n=(\log(\log n))^{1/3}$, and  $L_n=(\log n)^{1/2}$, respectively.
\end{tablenotes}
\end{threeparttable}
\end{table}

\begin{table}[htbp]
\setlength{\abovecaptionskip}{0cm}  
\setlength{\belowcaptionskip}{0.5cm} 
\centering
\caption{Empirical power at nominal level $\alpha = 0.05$ for the setting (ii).}
\label{tab:power2}
\begin{tabular*}{\textwidth}{c@{\extracolsep{\fill}}cccccccccccc}
\toprule
&& \multicolumn{3}{c}{E-R model} &  & \multicolumn{3}{c}{$\beta$-model} &  & \multicolumn{3}{c}{LSM$(d=1)$} \\ \cline{3-5} \cline{7-9} \cline{11-13} 
&$\rho$  & $0.02$ & $0.05$ & $0.1$ &  & $0.03$ & $0.05$ & $0.1$ &  & $0.02$ & $0.05$ & $0.1$ \\ \midrule
$n=200$ && 0.64 & 1 & 1 &  & 0.96 & 1 & 1 &  & 0.60 & 1 & 1 \\
$n=400$ && 1 & 1 & 1 &  & 1 & 1 & 1 &  & 1 & 1 & 1 \\
$n=600$ && 1 & 1 & 1 &  & 1 & 1 & 1 &  & 1 & 1 & 1 \\
$n=800$ && 1 & 1 & 1 &  & 1 & 1 & 1 &  & 1 & 1 & 1 \\
$n=1000$ && 1 & 1 & 1 &  & 1 & 1 & 1 &  & 1 & 1 & 1 \\ \bottomrule
\end{tabular*}
\end{table}

\begin{table}[htbp]
\setlength{\abovecaptionskip}{0cm}  
\setlength{\belowcaptionskip}{0.5cm} 
\centering
\caption{Empirical power at nominal level $\alpha = 0.05$ for the setting (iii).}
\label{tab:power3}
\begin{tabular*}{\textwidth}{c@{\extracolsep{\fill}}cccccccccccc}
\toprule
&& \multicolumn{3}{c}{E-R model} &  & \multicolumn{3}{c}{LSM$(d=1)$} &  & \multicolumn{3}{c}{LSM$(d=2)$} \\ \cline{3-5} \cline{7-9} \cline{11-13} 
&$\rho$ & $0.02$ & $0.05$ & $0.1$ &  & $0.02$ & $0.05$ & $0.1$ &  & $0.02$ & $0.05$ & $0.1$ \\ \midrule
$n=200$ && 0.68 & 1 & 1 &  & 0.66 & 1 & 1 &  & 0.34 & 0.87 & 0.98 \\
$n=400$ && 1 & 1 & 1 &  & 1 & 1 & 1 &  & 0.71 & 0.98 & 1 \\
$n=600$ && 1 & 1 & 1 &  & 1 & 1 & 1 &  & 0.90 & 0.99 & 1 \\
$n=800$ && 1 & 1 & 1 &  & 1 & 1 & 1 &  & 0.95 & 1 & 1 \\
$n=1000$ && 1 & 1 & 1 &  & 1 & 1 & 1 &  & 0.96 & 0.99 & 1 \\ \bottomrule
\end{tabular*}
\end{table}

\begin{table}[h]
\setlength{\abovecaptionskip}{0cm}  
\setlength{\belowcaptionskip}{0.5cm} 
\centering
\caption{Empirical power at nominal level $\alpha = 0.05$ for the setting (iv).}
\label{tab:power4}
\begin{tabular*}{\textwidth}{c@{\extracolsep{\fill}}cccccccccccc}
\toprule
&& \multicolumn{3}{c}{E-R model} &  & \multicolumn{3}{c}{SBM} &  & \multicolumn{3}{c}{LSM($d=2$)} \\ \cline{3-5} \cline{7-9} \cline{11-13} 
&$\rho$ & $0.2$ & $0.5$ & $1$ &  & $0.2$ & $0.5$ & $1$ &  & $0.2$ & $0.5$ & $1$ \\ \midrule
$n=200$ && 0.6 & 1 & 1 &  & 0.59 & 1 & 0.99 &  & 0.93 & 0.96 & 1 \\
$n=400$ && 1 & 1 & 1 &  & 1 & 1 & 1 &  & 0.97 & 0.97 & 1 \\
$n=600$ && 1 & 1 & 1 &  & 1 & 1 & 1 &  & 0.98 & 1 & 1 \\
$n=800$ && 1 & 1 & 1 &  & 1 & 1 & 1 &  & 0.98 & 0.99 & 1 \\
$n=1000$ && 1 & 1 & 1 &  & 1 & 1 & 1 &  & 0.99 & 0.99 & 1 \\ \bottomrule
\end{tabular*}
\end{table}

According to the model introduction in Section \ref{sec:method}, there is an inclusion relation between the models, such as the SBM is a special case of the DCSBM. Hence, when considering candidate models, we exclude models that contain true models. Tables \ref{tab:power1} - \ref{tab:power4} report empirical power for different cases from 200 data replications. From Tables \ref{tab:power1} - \ref{tab:power4}, we can observe that the statistic rejects the null hypothesis under all cases. At the same time, it is not hard to see that, with the sample size increasing, the empirical powers are more and more close to 1. 

\subsection{Estimating $K$ for DCMM models}
In the fourth simulation, we examine the performance of the sequential testing estimator of $K$ given in \eqref{eq:EstK} for DCMM models. To estimate the model parameters, we apply the mixed-SCORE algorithm in \cite{Jin:2023}. We restrict the candidate values for the true number of communities in the range $\{1,\cdots,10\}$. We set $K=3$, and $n=500$ or $100$. Given $\rho\in(0,1)$, let the probability matrix $B=\rho\cdot\bm{1}_K\bm{1}_K^\top+(1-\rho)\cdot I_K$. For $0\leq n_0\leq 160$, let each community have $n_0$ number of pure nodes i.e., $\pi_i$ has only one nonzero entry which is equal to 1, and the other entries are zero. For $x\in(0,1/2)$, the rest of the nodes have four different membership vectors $(x,x,1-2x),(x,1-2x,x),(1-2x,x,x)$ and $(1/3,1/3,1/3)$ with equal probability. For the degree parameters, let $1/\theta_i$'s are i.i.d. uniformly random variables in $[1,z]$ for $z\geq 1$. Given the threshold $\alpha=0.001$, Table \ref{tab:EstK} reports the estimation results under the different settings. It is easy to see that the proposed sequential testing method can estimate the number of communities with high accuracy.  According to the results in \cite{Jin:2023}, the higher the fraction of the pure nodes $n_0$, the more accurate the parameter estimation is when other model parameters are fixed. For the different $x$, as $x$ increases to $1/3$, these nodes become less pure; then, as $x$ further increases, these nodes become more pure, which causes the estimation accuracy to decrease first and then increase. For the degree heterogeneous, the larger $z$,  the lower average degree, and the more severe degree heterogeneity. Hence, with the $z$ increasing, the estimation accuracy also corresponding decreases. In addition, as the sample increases, the proportion of correct estimation also increases. Therefore, the numerical results show that the proposed method is an effective and efficient method.

\begin{table}[htbp]
\begin{threeparttable}
\setlength{\abovecaptionskip}{0cm}  
\setlength{\belowcaptionskip}{0.5cm} 
\centering
\caption{Performances of the proposed sequential testing method for estimating the number of\\  communities over 100 simulations.}
\label{tab:EstK}
\begin{tabular}{cc|ccc|ccc}
\toprule
& & $\hat{\bP}\{\hat{K}=K\}$ & $\hat{\bE}\{\hat{K}\}$ & $\widehat{\var}\{\hat{K}\}$ & $\hat{\bP}\{\hat{K}=K\}$ & $\hat{\bE}\{\hat{K}\}$ & $\widehat{\var}\{\hat{K}\}$ \\ \cline{3-8} 
& & \multicolumn{3}{c|}{$n=500$}                                                    & \multicolumn{3}{c}{$n=1000$} \\ \midrule
\multicolumn{1}{c|}{\multirow{4}{*}{\tabincell{c}{$(x,\rho,z)=$\\$(0.4,0.1,5)$}}} & $n_0=40$ & 0.21 & 2.21 & 0.17 & 0.55 & 0.55 & 0.25 \\
\multicolumn{1}{c|}{}                                             & $n_0=80$ & 0.69 & 2.75 & 0.25 & 0.98 & 2.98 & 0.02 \\
\multicolumn{1}{c|}{}                                             & $n_0=120$ & 0.85 & 2.99 & 0.15 & 1.00 & 3.00 & 0.00 \\
\multicolumn{1}{c|}{}                                             & $n_0=160$ & 0.98 & 2.98 & 0.02 & 1.00 & 3.00 & 0.00 \\ \midrule
& & \multicolumn{3}{c|}{$n=500$}                                                    & \multicolumn{3}{c}{$n=1000$} \\ \midrule
\multicolumn{1}{c|}{\multirow{4}{*}{\tabincell{c}{$(n_0,\rho,z)=$ \\$(80,0.1,5)$}}} & $x=0.05$ & 0.83 & 2.99 & 0.17 & 1.00 & 3.00 & 0.00 \\
\multicolumn{1}{c|}{}                                             & $x=0.15$ & 0.71 & 2.77 & 0.24 & 0.99 & 2.99 & 0.01 \\
\multicolumn{1}{c|}{}                                             & $x=0.25$ & 0.45 & 2.51 & 0.31 & 0.98 & 3.00 & 0.02  \\
\multicolumn{1}{c|}{}                                             & $x=0.35$ & 0.73 & 2.81 & 0.23 & 0.96 & 3.04 & 0.04  \\ \midrule
& & \multicolumn{3}{c|}{$n=500$}                                                    & \multicolumn{3}{c}{$n=1000$}                                                    \\ \midrule
\multicolumn{1}{c|}{\multirow{4}{*}{\tabincell{c}{$(x,n_0,\rho)=$\\$(0.4,80,0.1)$}}} & $z=1$ & 1.00 & 3.00 & 0.00 & 1.00 & 3.00 & 0.00 \\
\multicolumn{1}{c|}{}                                             & $z=3$ & 0.98 & 2.98 & 0.02 & 1.00 & 3.00 & 0.00 \\
\multicolumn{1}{c|}{}                                             & $z=5$ & 0.64 & 2.66 & 0.25 & 0.94 & 2.96 & 0.06 \\
\multicolumn{1}{c|}{}                                             & $z=7$ & 0.48 & 2.58 & 0.35 & 0.74 & 2.84 & 0.24 \\ 
\bottomrule
\end{tabular}
\begin{tablenotes}[para]
   $^*$ Since the fraction of pure node depends on the sample size $n$, the $n_0$ when $n=1000$ is twice as much as when $n = 500$ under other settings are the same, such as set $n_0=40$ and $n_0=80$ when $n=500$ and $n=1000$, respectively, in the setting of the third line of the table.
\end{tablenotes}
\end{threeparttable}
\end{table}

\section{Real analysis}\label{sec:real}

In this section, we apply the proposed method to five real network datasets. The first dataset is the food web dataset. This dataset is from \cite{Baird:1989} and is available in \cite{Blitzstein:2011}, which contains data on 33 organisms (such as bacteria, oysters, and catfish) in the Chesapeake Bay during the summer. The karate club network of \cite{Zachary:1977} is a social network of 34 members of a karate club at a US university, with edges representing friendship patterns. The dolphin network collected by \cite{Lusseau:2003} is an undirected social network of frequent associations between 62 dolphins in a community living off Doubtful Sound, New Zealand. The college football network is derived from the schedule of Division I games for the 2000 season in the United States \citep{Girvan:2002}. It has 115 nodes, representing the football teams, and 613 edges, indicating regular-season games between pairs of teams. The international trade dataset originally analyzed in \cite{Westveld:2011} contains yearly international trade data between 58 countries from 1981 to 2000. For this network, an adjacency matrix $A$ can be formed by first considering a weight matrix $W$ with $W_{ij} = Trade_{i,j} + Trade_{j,i}$, where $Trade_{i,j}$ denotes the value of exports from country $i$ to country $j$. Finally, we define $A_{ij} = 1$ if $W_{ij} \geq W_{0.5}$, and $A_{ij} = 0$ otherwise; here $W_{0.5}$ denotes the $50\%$-th quantile of $\{W_{ij}\}_{1\leq i<j\leq n}$. Food web dataset and trade dataset are from \cite{Blitzstein:2011} and \cite{Westveld:2011}, and the other 3 datasets are downloaded from \url{http://www-personal.umich.edu/~mejn/netdata/}). Table \ref{tab:realdata} reports the number of nodes, the number of communities, the number of edges, and the node degree for the 5 network datasets. Note that the network can be seen as there is a severe degree heterogeneity when $d_{max}/d_{min}$ are as large as a few hundred. 

Based on the proposed method, Table \ref{tab:realpvalue} reports the $p$-values of the test for the 5 networks. The results show that every network can fit a network model. It is worth noting that the $p$-values of the E-R model and SBM are equal for the foodweb network and karate network. This is because, under the SBM, the number of communities is estimated as 1, and then the model is reduced to the E-R model. Hence, we can consider the foodweb network to be from the E-R model. For the last 4 data, the previous studies have shown that the networks have a community structure, and the proposed test procedure can also accept the SBMs and DCSBMs can fit these data under different nominal levels. 

Finally, we apply the method in Section \ref{sec:appdcmm} to the last four datasets. The sequential estimation $\hat{K}$ are 2, 4, 11, and 2 for the last four networks karate, dolphin, football, and trade, respectively. The sequential estimations $\hat{K}$ are consistent with the true number of communities, except for the trade data is underestimated. Figure \ref{fig:real} plots these network visualizations. All analysis results show that the proposed procedure is an effective and efficient method.

\begin{table}[htbp]
\setlength{\abovecaptionskip}{0cm}  
\setlength{\belowcaptionskip}{0.5cm} 
\centering
\caption{The 5 network data sets we analyze in this paper. ($d_{min}$, $d_{max}$, and $\bar{d}$ stand for the minimum degree, maximum degree, and average degree, respectively).}
\label{tab:realdata}
\begin{tabular*}{\textwidth}{c@{\extracolsep{\fill}}cccccc}
\toprule
Dataset & $n$ & $K$ & $\#$edges & $d_{max}$ & $d_{min}$ & $\bar{d}$ \\ \midrule
foodweb & 33 & - & 71 & 1 & 10 & 4.30 \\ 
karate & 34 & 2 & 78 & 1 & 17 & 4.59 \\
dolphin & 62 & 2,4 & 159 & 1 & 12 & 5.12 \\
football & 110 & 11 & 570 & 7 & 13 & 10.36 \\
trade & 58 & 3 & 841 & 3 & 57 & 29 \\ \bottomrule
\end{tabular*}
\end{table}

\begin{figure}[htbp]
	\centering
	\vspace{-0.35cm}
	\setlength{\abovecaptionskip}{-2pt}
	\subfigtopskip=2pt
	\subfigbottomskip=2pt
	\subfigcapskip=-5pt
	
	\subfigure[The food web network]{\label{fig:foodweb}
	\includegraphics[width=0.4\linewidth]{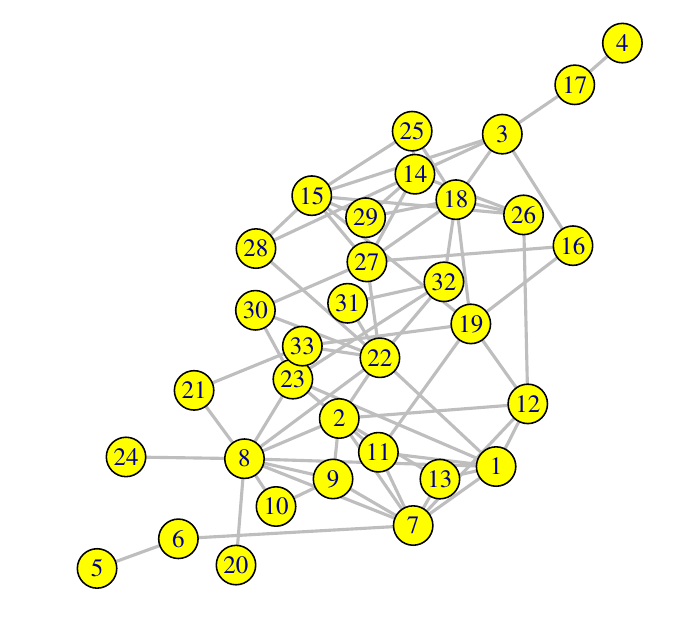}}
	
	\subfigure[The karate club network]{\label{fig:karate}
	\includegraphics[width=0.4\linewidth]{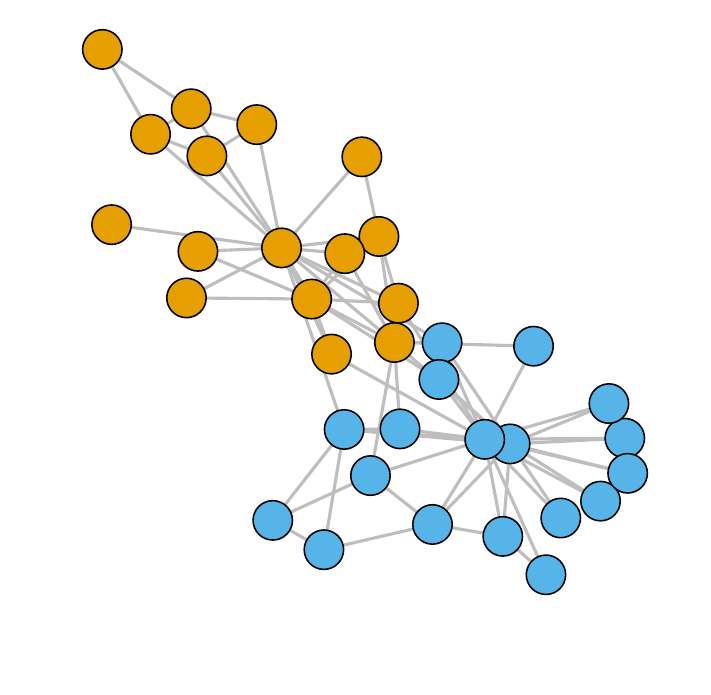}}
	\subfigure[The dolphin social network]{\label{fig:dolphin}
	\includegraphics[width=0.4\linewidth]{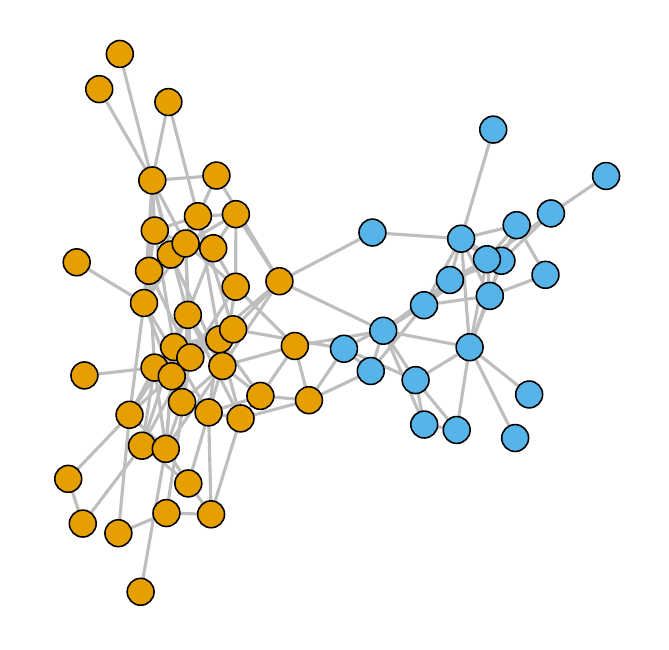}}
	
	\subfigure[The college football network]{\label{fig:football}
	\includegraphics[width=0.4\linewidth]{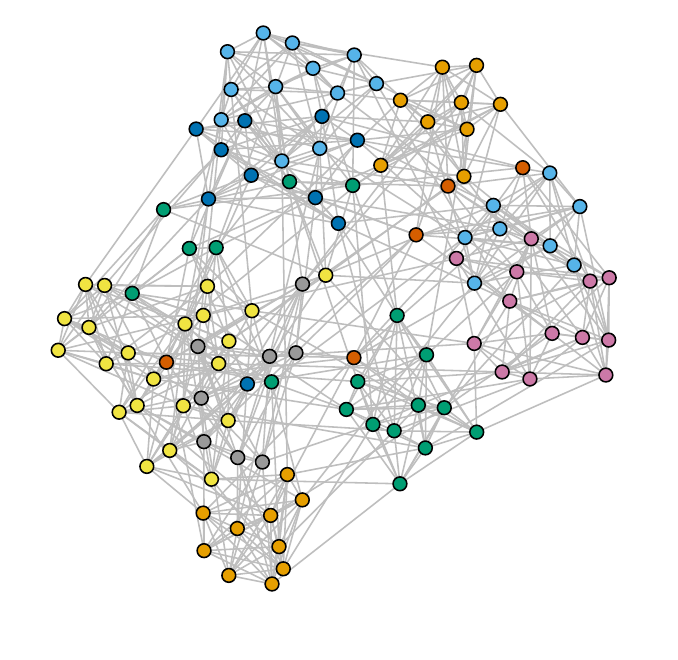}}
	\subfigure[The international trade network]{\label{fig:trade}
	\includegraphics[width=0.4\linewidth]{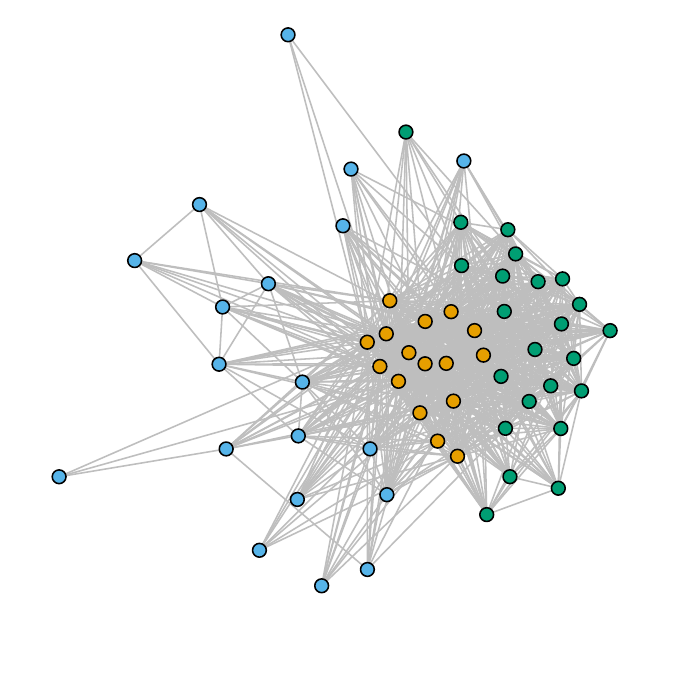}}
	
	\vspace{0.5cm}
	\caption{The network visualizations for five real network datasets. For the last 4 networks, the nodes with common colors are clustered into one group.}
	\label{fig:real}
\end{figure}

\begin{table}[htbp]
\setlength{\abovecaptionskip}{0cm}  
\setlength{\belowcaptionskip}{0.5cm} 
\centering
\caption{The $p$-values for 5 network data sets under the different model.}
\label{tab:realpvalue}
\begin{threeparttable}
\begin{tabular*}{\textwidth}{c@{\extracolsep{\fill}}ccccc}
\toprule
Dataset & E-R model & $\beta$-model & SBM & DCSBM & LSM \\ \midrule
foodweb & $\bm{0.9362}$ & $0.6393$ & $0.9362$ & $0.6699$ & $0.0031$ \\ 
karate & $0.2625$ & $4.4146\times10^{-7}$ & $0.2625$ & $\bm{0.7866}$ & $2.4699\times10^{-15}$ \\
dolphin & $1.5733\times10^{-48}$ & $4.0769\times10^{-14}$ & $1.6186\times10^{-6}$ & $\bm{0.0007}$ & $0$ \\
football & $0$ & $0$ & $\bm{0.0383}$ & $0.0003$ & $0$ \\
trade & $4.7139\times10^{-45}$ & $0$ & $\bm{0.2248}$ & $0.0462$ & $2.8909\times10^{-6}$ \\ \bottomrule
\end{tabular*}
\begin{tablenotes}[para]
   $^*$ The bold values represent the $p$-value corresponding to the most suitable model for the network data.
\end{tablenotes}
\end{threeparttable}
\end{table}

\section{Conclusion}\label{sec:conclusion}

In this paper, we have proposed a novel spectral-based statistic to investigate the goodness-of-fit test for the general network model. Based on the random matrix theory, we have proved the limiting distribution of trace of the third-order of the normalized adjacency matrix is a normal distribution. Further, plugging in an estimate of parameters, we have also proved the limiting distribution of trace of the third-order of the empirically normalized adjacency matrix is also a normal distribution under some mild conditions. Empirically, we have demonstrated that the size and the power of the test are valid.

In some network models, as a technical reason, the errors of the parameter estimation are not fully investigated, such as DCMM models. Hence, the theoretical properties of the proposed statistic are difficult to be obtained under the existing method. In addition, based on the proposed method, we give a sequential estimation for the number of communities $K$ in DCMM models. However, the consistency is not proven. One of the main obstacles is the non-identifiability of the model. Further, we can also consider estimating the dimension $d$ for the LSM, which is an interesting and under-explored issue. We leave the detailed formulation and theoretical study to future work.

\section{Appendix}\label{sec:appendix}

\subsection{Results from random matrix theory}

In this section, we first collect some useful results from random matrix theory (RMT) regarding the Wigner matrix.

Let $W_n$ be a $n\times n$ Wigner matrix with eigenvalues $\lambda_1(W_n),\ldots,\lambda_n(W_n)$. Since $W_n$ is a Hermitian matrix, then all eigenvalues are real. The empirical spectral distribution (ESP) of $W_n$ is as follows:
\[
F^{W_n}(x) = \dfrac{1}{n}\sum_{i=1}^n I[\lambda_i(W_n)\leq x].
\]

A Wigner matrix is a Hermitian random matrix whose elements on or above the diagonal are independent. Suppose $W_n$ is an $n\times n$ Hermitian matrix whose diagonal elements are i.i.d. random variables and those above the diagonal are i.i.d. random variables with mean 0 and variance 1. Then, \cite{Bai:2016} proved that the ESD of normalized Wigner matrix $X_n=W_n/\sqrt{n}$ tends to the semicircular law
\[
F(x) = \dfrac{1}{2\pi}\sqrt{4-x^2},\qquad x\in[-2,2],
\]
with probability 1.

Let $\mathscr{U}$ be an open set of the real line that contains the interval $[-2,2]$. Further, define $\mathscr{F}$ to be the set of analytic functions $f:\mathscr{U}\mapsto\bR$. Then, we mainly consider the empirical process $\mathcal{G}_n=\{\mathcal{G}_n(f)\}$ indexed by $\mathscr{F}$; i.e.,
\[
\mathcal{G}_n(f) = n\int_{\bR}f(x)[F^{X_n}-F](dx),\qquad f\in\mathscr{F}.
\]
To study the limiting behavior of $\mathcal{G}_n$, the following conditions on the moments of the entries $W_{ij}$ of Wigner matrix $W_n$ with $\bE\{W_{ij}\}=0$ are given in \cite{Wang:2021}:
\begin{enumerate}
	\item The random variables $W_{ij}'s$ are uniformly bounded in any $L^p$ space ($p\geq 1$). 
	\item For all $i$, $\bE\{|W_{ii}|^2\}=\sigma^2$, for all $i<j$, $\bE\{|W_{ij}|^2\}=1$;
	\item $\dfrac{1}{n^2}\sum_{i,j}\bE\{|W_{ij}|^4\}\rightarrow M$;
	\item For any $\eta>0$, as $n\rightarrow\infty$,
	      \[
	      \dfrac{1}{\eta^4n^2}\sum_{i,j}\bE\{|W_{ij}|^4 I[|W_{ij}|\geq\eta\sqrt{n}]\} = o(1).
	      \]
\end{enumerate}

Define, for $f\in\mathscr{F}$ and any integer $\ell\geq0$, $\tau_\ell(f)=\dfrac{1}{2\pi}\int_{-\pi}^{\pi}f(2\cos\theta)e^{i\ell\theta}d\theta$. Then \cite{Wang:2021} given the following theorem:
\begin{lemma}[Theorem 2.1 in \cite{Wang:2021}]\label{lemma:wang}
	Under conditions [C1]--[C4], the spectral empirical process $\mathcal{G}_n=\{\mathcal{G}_n(f)\}$ indexed by the set of analytic functions $\mathscr{F}$ converges weakly in finite dimension to a Gaussian process $\mathcal{G}=\{\mathcal{G}(f):f\in\mathscr{F}\}$ with mean function $\bE(\mathcal{G}(f))$ given by
	\[
	\dfrac{1}{4}\left\{f(2)+f(-2)\right\}-\dfrac{1}{2}\tau_0(f)+(\sigma^2-2)\tau_2(f)+(M-3)\tau_4(f)
	\]
	and the covariance function $c(f,g)=\bE\{[\mathcal{G}(f)-\mathcal{G}(g)][\mathcal{G}(f)-\mathcal{G}(g)\}$ given by
	\[
	(\sigma^2-2)\tau_1(f)\tau_1(g)+2(M-3)\tau_2(f)\tau_2(g)+2\sum_{\ell=1}^{\infty}\ell\tau_\ell(f)\tau_\ell(g)=\dfrac{1}{4\pi^2}\int_{-2}^{2}\int_{-2}^{2}f^{\prime}(t)g^{\prime}(s)V(t,s)dtds,
	\]
	where $V(t,s)=\left(\sigma^2-2+\dfrac{1}{2}(M-3) ts\right)\sqrt{(4-t^2)(4-s^2)}+2\log\left(\dfrac{4-ts+\sqrt{(4-t^2)(4-s^2)}}{4-ts-\sqrt{(4-t^2)(4-s^2)}}\right)$.
\end{lemma}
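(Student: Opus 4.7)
The plan is to follow the classical Stieltjes-transform route for central limit theorems for linear spectral statistics (LSS) of Wigner matrices. Fix a positively oriented contour $\Gamma\subset\mathscr{U}$ that encloses $[-2,2]$ and stays bounded away from the real axis outside this interval. For any $f\in\mathscr{F}$, Cauchy's formula gives
\[
\mathcal{G}_n(f)=-\frac{1}{2\pi i}\oint_{\Gamma}f(z)\,M_n(z)\,dz,\qquad M_n(z):=n\bigl(s_n(z)-s(z)\bigr),
\]
where $s_n(z)=n^{-1}\tr(X_n-zI)^{-1}$ and $s(z)=\int(x-z)^{-1}F(dx)$ is the Stieltjes transform of the semicircle law. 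Hence the finite-dimensional weak convergence of $\mathcal{G}_n$ on $\mathscr{F}$ reduces to the finite-dimensional weak convergence of the random analytic process $\{M_n(z):z\in\Gamma\}$ to a Gaussian process $\{M(z)\}$, together with a uniform-integrability control on $\Gamma$ that justifies pushing the weak limit through the contour integral.

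I would then split $M_n(z)=\bigl(M_n(z)-\mathbb{E}M_n(z)\bigr)+\mathbb{E}M_n(z)$ and treat the two parts separately. Denote the resolvent by $R_n(z)=(X_n-zI)^{-1}$. For the centred part, introduce the martingale differences $Y_{n,k}(z)=(\mathbb{E}_{k}-\mathbb{E}_{k-1})\tr R_n(z)$, where $\mathbb{E}_k$ is the conditional expectation given the first $k$ rows/columns of $W_n$, so that $M_n(z)-\mathbb{E}M_n(z)=\sum_{k=1}^n Y_{n,k}(z)$. Using the Schur complement identity, each $Y_{n,k}(z)$ is expressed as a quadratic form in the $k$th column of $W_n$ against a rank-one perturbation of the truncated resolvent $R_n^{(k)}(z)$. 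The uniform $L^p$ bounds from [C1], together with the truncation condition [C4], verify the Lindeberg condition, and the conditional variances $\sum_k\mathbb{E}_{k-1}|Y_{n,k}(z)|^2$ are computed in closed form using the second- and fourth-moment identities [C2] and [C3]. This is precisely where the parameters $\sigma^{2}$ and $M$ enter the limiting covariance. Applying Billingsley's martingale CLT yields finite-dimensional Gaussian limits for $(M_n-\mathbb{E}M_n)$ along $\Gamma$.

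For the bias $\mathbb{E}M_n(z)$, I would write a self-consistent equation for $\mathbb{E}s_n(z)$ by taking expectations in the resolvent identity $-zR_{ii}(z)=1+(W_n R_n(z))_{ii}$ and carefully tracking the order-$n^{-1}$ corrections produced by the diagonal-variance mismatch $\sigma^{2}-2$ and the excess fourth moment $M-3$. Substituting $z=2\cos\theta+i0$ under the contour integral and using the Chebyshev expansion $f(2\cos\theta)=\sum_{\ell}\tau_{\ell}(f)\,e^{i\ell\theta}$ translates the corrections into the stated coefficients: the edge term $\tfrac14\{f(2)+f(-2)\}-\tfrac12\tau_0(f)$ is the contribution of the semicircle endpoints, while $(\sigma^{2}-2)\tau_{2}(f)$ and $(M-3)\tau_{4}(f)$ come from the diagonal and fourth-cumulant corrections. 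A parallel residue calculation applied to the limiting conditional-variance kernel reproduces the series $2\sum_{\ell\geq 1}\ell\,\tau_\ell(f)\tau_\ell(g)$, and reparametrising $z=u+u^{-1}$ with $|u|>1$ (the Joukowski map) yields the closed-form kernel $V(t,s)$ by elementary residue arithmetic.

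The principal obstacle is establishing sharp resolvent estimates uniformly in $z\in\Gamma$: in particular, an isotropic bound $\max_{i,j,\,z\in\Gamma}|(R_n(z))_{ij}|=O_p(1)$ and sufficiently high-order moments of the martingale increments $Y_{n,k}(z)$, both of which are required to verify the Lindeberg condition and to exchange the weak limit with the contour integral. These bounds necessitate a preliminary truncation step, replacing $W_{ij}$ by $W_{ij}\,I[|W_{ij}|\leq\eta\sqrt{n}]$ with $\eta\to 0$ slowly; condition [C4] is calibrated precisely so that the truncated matrix has the same limiting spectral statistics as $W_n$. A secondary, more computational difficulty is the explicit identification of the covariance kernel $V(t,s)$ from the limit of the martingale conditional variance, which amounts to matching coefficients in two Chebyshev expansions along $\Gamma$; this step is standard but requires careful bookkeeping of the logarithmic and polynomial contributions. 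With these ingredients in place, the claimed finite-dimensional Gaussian convergence of $\mathcal{G}_n$ follows from the CLT for $M_n$ combined with the Cauchy-integral representation.
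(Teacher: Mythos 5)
The paper does not prove this lemma: it is quoted verbatim as Theorem 2.1 of \cite{Wang:2021} and used as an imported black box, so there is no internal proof to compare your attempt against. Your outline is the classical Bai--Yao programme for CLTs of linear spectral statistics --- represent $\mathcal{G}_n(f)$ by a Cauchy contour integral against $M_n(z)=n(s_n(z)-s(z))$, prove finite-dimensional Gaussian convergence of $M_n$ on the contour via a martingale-difference decomposition of $\tr(X_n-zI)^{-1}$ and the Schur complement, compute the bias $\bE M_n(z)$ from a self-consistent equation, and read off the coefficients through the Joukowski substitution $z=u+u^{-1}$ and the Chebyshev expansion $f(2\cos\theta)=\sum_{\ell}\tau_\ell(f)e^{i\ell\theta}$. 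This is indeed the standard machinery behind results of exactly this form, and your attribution of the $(\sigma^2-2)$ and $(M-3)$ terms to the diagonal-variance mismatch and the excess fourth moment is correct. Two caveats. First, the presence of the coefficients $\tau_\ell(f)$ in both the mean and the covariance points to an equally standard alternative route that avoids resolvents entirely: establish the joint CLT for the traces of Chebyshev polynomials $\tr\,T_\ell(X_n/2)$ by the moment method (counting closed walks, where conditions [C2]--[C4] enter as the second- and fourth-moment contributions of the walk types), and then extend to general analytic $f$ by density; for the application in this paper, where only $f(x)=x^3$ is ever used, that route is considerably lighter. Second, as written your argument is a roadmap rather than a proof: the uniform resolvent bounds on the contour, the tightness needed to interchange the weak limit with the contour integral, the treatment of eigenvalues escaping $[-2,2]$, and the residue bookkeeping that yields the closed-form kernel $V(t,s)$ are named but not executed. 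Since the authors themselves supply no proof and simply cite the source, your sketch already says more than the paper does, but it should be presented as a pointer to the literature rather than as a self-contained derivation.
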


Next, this theorem will be the main tool for the proof of the main result.

\begin{lemma}\label{lemma:1}
	For $\tilde{A}^*$ defined in \eqref{eq:spec1}, we have
	\[
	\dfrac{1}{\sqrt{6}}\tr((\tilde{A}^{*})^3)\rightsquigarrow N(0,1).
	\]
\end{lemma}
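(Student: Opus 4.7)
\textbf{Proof proposal for Lemma \ref{lemma:1}.}

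The plan is to cast the statement as an application of Lemma \ref{lemma:wang} to the rescaled Wigner matrix $W_n := \sqrt{n}\,\tilde A^*$, with the analytic test function $f(x)=x^3$. Indeed, $W_n$ is symmetric with independent entries on and above the diagonal, vanishing diagonal ($W_{ii}=0$), and, for $i\neq j$, the centered and standardized Bernoulli variable $W_{ij}=(A_{ij}-p_{ij})/\sqrt{p_{ij}(1-p_{ij})}$ of mean $0$ and variance $1$. Writing $X_n=W_n/\sqrt n=\tilde A^*$, the empirical spectral process evaluated at $f$ satisfies
\[
\mathcal G_n(f)=n\!\int x^3\,[F^{X_n}-F](dx)=\tr((\tilde A^*)^3)-n\!\int x^3\,dF(x)=\tr((\tilde A^*)^3),
\]
since the semicircle law $F$ is symmetric and so $\int x^3\,dF(x)=0$. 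Hence it suffices to identify the law of $\mathcal G_n(f)$.

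I would next verify the moment conditions [C1]--[C4] of Lemma \ref{lemma:wang} for $W_n$ under the mild assumption that the $p_{ij}$'s are bounded away from $0$ and $1$ (which keeps the standardized Bernoulli entries uniformly bounded, giving [C1] automatically, yields [C4] by triviality, and makes the average fourth moment $n^{-2}\sum_{i,j}\bE|W_{ij}|^4$ convergent to some constant $M$ along subsequences). The key parameters are then $\sigma^2=\bE W_{ii}^2=0$ (the diagonal is identically zero) and off-diagonal variance $1$, as required in [C2].

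With these in place, I invoke Lemma \ref{lemma:wang}. Using the expansion $f(2\cos\theta)=8\cos^3\theta=6\cos\theta+2\cos 3\theta$ and orthogonality of the Chebyshev basis, the coefficients $\tau_\ell(f)=\frac{1}{2\pi}\int_{-\pi}^\pi f(2\cos\theta)e^{i\ell\theta}\,d\theta$ reduce to $\tau_1(f)=3$, $\tau_3(f)=1$, and $\tau_\ell(f)=0$ otherwise; in particular $\tau_0(f)=\tau_2(f)=\tau_4(f)=0$. Plugging into the mean formula and using $f(2)+f(-2)=0$ gives $\bE\mathcal G(f)=0$. For the variance, the $(\sigma^2-2)\tau_1^2$ and $2(M-3)\tau_2^2$ terms contribute $(\sigma^2-2)\cdot 9$ and $0$ respectively, while the series $2\sum_{\ell\geq 1}\ell\,\tau_\ell(f)^2$ contributes $2(1\cdot 9+3\cdot 1)=24$. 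With $\sigma^2=0$ the variance equals $-18+24=6$, so $\tr((\tilde A^*)^3)\rightsquigarrow N(0,6)$ and dividing by $\sqrt 6$ gives the claim.

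The main obstacle is the technical verification of [C1]--[C4] for a generalized Wigner matrix whose entries are Bernoulli with heterogeneous $p_{ij}$ and a vanishing diagonal; one must check that setting $\sigma^2=0$ is admissible (Wang's theorem allows any value of $\sigma^2$, but one should confirm no positivity is implicitly used) and that, even without uniform boundedness of $p_{ij}$ away from $0$, a standard truncation argument still delivers the same limit — a detail that is routine but worth spelling out to keep the statement ``under a mild condition'' honest.
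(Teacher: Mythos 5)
Your proposal follows essentially the same route as the paper: verify conditions [C1]--[C4] for $W_n=\sqrt{n}\,\tilde A^*$, apply Lemma \ref{lemma:wang} with $f(x)=x^3$, note $\mathcal G_n(f)=\tr((\tilde A^*)^3)$ since the semicircle law has vanishing third moment, and identify the limit as $N(0,6)$ with $\sigma^2=0$. Your explicit Chebyshev computation ($\tau_1(f)=3$, $\tau_3(f)=1$, all other $\tau_\ell=0$, hence mean $0$ and variance $-18+24=6$) correctly fills in the step the paper leaves as ``it is easy to get,'' and your remark that boundedness of the entries (i.e.\ $p_{ij}$ bounded away from $0$ and $1$, or a truncation argument) is needed for [C1] and [C4] is a fair caveat that the paper glosses over.
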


\begin{proof}
	First, we verify that $W=\sqrt{n}\tilde{A}^{*}$ satisfies the conditions [C1]--[C4]. Condition [C1] is a trivial fact, due to $W_{ii} = 0$ and $W_{ij} = \dfrac{A_{ij}-p_{ij}}{p_{ij}(1-p_{ij})}$. For Condition [C2], it is not difficult to see that, for all $1\leq i,j\leq n$, $\bE\{|W_{ii}|^2\}=0$ and $\bE\{|W_{ij}|^2\}=n\bE\{|\tilde{A}^{*}_{ij}|^2\}=1$. For Condition [C3], we have $\bE\{|W_{ij}|^4\}=n^2\bE\{|\tilde{A}^{*}_{ij}|^4\}=O(1)$ and $\sum_{ij}\bE\{|W_{ij}|^4\} = O(n^2)$. Next, we verify Condition [C4]. For any $\eta>0$, we have
	\begin{align*}
		&~ \dfrac{1}{\eta^4n^2}\sum_{i,j}\bE\{|W_{ij}|^4 I[|W_{ij}|\geq\eta\sqrt{n}]\}\\
		\leq &~ \dfrac{1}{\eta^4n^2}\sum_{i,j}\left(\bE\{|W_{ij}|^8\}\right)^{1/2}\left(\bP\{|W_{ij}|\geq\eta\sqrt{n}\}\right)^{1/2}\\
		\leq &~ \dfrac{C}{\eta^4}\max_{i,j}\left\{\left(\bP\{|W_{ij}|\geq\eta\sqrt{n}\}\right)^{1/2}\right\}\\
		= &~ \dfrac{C}{\eta^4}\max_{i\neq j}\left\{\left(\bP\left\{\left|\dfrac{A_{ij}-p_{ij}}{\sqrt{p_{ij}(1-p_{ij})}}\right|\geq\eta\sqrt{n}\right\}\right)^{1/2}\right\}\\
		= &~ o(1),
	\end{align*}
	where $C$ is a constant that upper bound $\left(\bE\{|X_{ij}|^8\}\right)^{1/2}$, and the last equality is due to $W_{ij}$'s are uniformly bounded. Set $f(x)=x^3\in\mathscr{F}$, then we have
	\begin{align*}
		\mathcal{G}_n(f) & = n\int_{\bR}f(x)[F_n-F](dx) \\
		& =	n\int_{\bR}x^3F_n(dx)-n\int_{\bR}x^3F(dx) \\
		& = \sum_{i=1}^n[\lambda_i(\tilde{A}^{*})]^3 \\
		& = \tr((\tilde{A}^{*})^3).
	\end{align*}
	Finally, following Lemma \ref{lemma:wang}, it is easy to get $\bE(\mathcal{G}(x^3))=0$ and $\var(\mathcal{G}(x^3))=6$.
\end{proof}

\subsection{Proof of Theorem \ref{thm:null}}
Let $\tilde{A}^{\prime}\in\bR^{n\times n}$ be such that
\begin{equation*}
	\tilde{A}^{\prime}_{ij} = \begin{dcases}\dfrac{A_{ij}-\hat{p}_{ij}}{\sqrt{np_{ij}(1-p_{ij})}}, & i\neq j,\\ 0, & i=j.\end{dcases}
\end{equation*}
Thus, we have $\tilde{A}^{\prime}=\tilde{A}^*+\Delta^{\prime}$.

Then, we consider the difference between $\tr((\tilde{A}^{\prime})^3)$ and $\tr((\tilde{A}^{*})^3)$. It is easy to see that
\begin{equation}\label{eq:app6}
\tr((\tilde{A}^{\prime})^3-(\tilde{A}^{*})^3)=3\tr((\tilde{A}^{*})^2\Delta^{\prime})+3\tr(\tilde{A}^{*}(\Delta)^2)+\tr((\Delta^{\prime})^3).
\end{equation}

For the first term of equality \eqref{eq:app6}, we have
\begin{align}\label{eq:app7}
	\tr((\tilde{A}^{*})^2\Delta^{\prime}) & = \sum_{i,j,k}\tilde{A}^{*}_{ij}\tilde{A}^{*}_{jk}\Delta_{ki}^{\prime} \notag \\
	& = \sum_{j} \sum_{k,i}\tilde{A}^{*}_{jk}\Delta_{ki}^{\prime}\tilde{A}^{*}_{ij} \notag \\
	& = \sum_{k\neq i}\Delta_{ki}^{\prime}\sum_{j}\tilde{A}^{*}_{jk}\tilde{A}^{*}_{ij}
\end{align}

Further, we have
\begin{align*}
\bE\left\{\sum_{k\neq i}\Delta_{ki}^{\prime}\sum_{j}\tilde{A}^{*}_{jk}\tilde{A}^{*}_{ij}\right\} & = 2\bE\left\{\bE\left\{\sum_{k>i}\Delta_{ki}^{\prime}\sum_{j}\tilde{A}^{*}_{jk}\tilde{A}^{*}_{ij} \vert \Delta^{\prime}\right\}\right\}\notag \\
& = 2\bE\left\{\sum_{k>i}\Delta_{ki}^{\prime}\sum_{j}\bE\{\tilde{A}^{*}_{jk}\tilde{A}^{*}_{ij}\}\right\}\notag \\
& = 0,
\end{align*}
and
\begin{align*}
	\var\left\{\sum_{k\neq i}\Delta_{ki}^{\prime}\sum_{j}\tilde{A}^{*}_{jk}\tilde{A}^{*}_{ij}\right\} & = \var\left\{\bE\left\{\sum_{k\neq i}\Delta_{ki}^{\prime}\sum_{j}\tilde{A}^{*}_{jk}\tilde{A}^{*}_{ij}\vert\Delta^{\prime}\right\}\right\} \\
	& \qquad + \bE\left\{\var\left\{\sum_{k\neq i}\Delta_{ki}^{\prime}\sum_{j}\tilde{A}^{*}_{jk}\tilde{A}^{*}_{ij}\vert\Delta^{\prime}\right\}\right\} \\
	& = 4\bE\left\{\sum_{k>i}(\Delta_{ki}^{\prime})^2\sum_{j}\var\{\tilde{A}^{*}_{jk}\tilde{A}^{*}_{ij}\}\right\} \\
	& \leq 4\bE\left\{\sum_{k>i}(\Delta_{ki}^{\prime})^2\sum_{j}\bE\{(\tilde{A}_{kj}^{*})^2\}\bE\{(\tilde{A}_{lj}^{*})^2\}\right\}\\
	& = o_p(n^{-1/2}).
\end{align*}
The last line use result that $\max_{ij}|\Delta_{ij}^{\prime}|=o_p(n^{-3/4})$. Hence, $\tr((\tilde{A}^{*})^2\Delta^{\prime}) = o_p(n^{-1/2})$.

For the second term of equality \eqref{eq:app6}, we have
\[
\tr(\tilde{A}^{*}(\Delta^{\prime})^2)=\tr(\tilde{A}^{*}\Gamma^\top\Lambda^2\Gamma)=\tr(\Lambda^2\Gamma\tilde{A}^{*}\Gamma^\top).
\]
According to the algebra calculation, we have
\begin{align*}
	\Gamma\tilde{A}^{*}\Gamma^\top & = \begin{pmatrix}\Gamma_1^\top \\ \vdots\\
	\Gamma_n^\top\end{pmatrix}\tilde{A}^{*}\begin{pmatrix}\Gamma_1 & \cdots &
	\Gamma_n\end{pmatrix}\\
	& = \begin{pmatrix} \Gamma_1^\top\tilde{A}^{*}\Gamma_1 & \cdots & \Gamma_1^\top\tilde{A}^{*}\Gamma_n \\ \vdots & \ddots & \vdots \\
		\Gamma_n^\top\tilde{A}^{*}\Gamma_1 & \cdots & \Gamma_n^\top\tilde{A}^{*}\Gamma_n
	\end{pmatrix}.
\end{align*}
Hence, 
\[
\tr(\Lambda^2\Gamma\tilde{A}^{*}\Gamma^\top) = \sum_i\lambda_i^2\Gamma_i^\top\tilde{A}^{*}\Gamma_i.
\]
For each $i=1,\ldots,n$, we have $\Gamma_i^\top\tilde{A}^{*}\Gamma_i = \sum_{k}\Gamma_{ik}^2\tilde{A}_{kk}^{*} + \sum_{k\neq l}\Gamma_{ik}\Gamma_{il}\tilde{A}_{kl}^{*}$, and 
\begin{align*}
	\var\left\{\sum_{k\neq l}\Gamma_{ik}\Gamma_{il}\tilde{A}_{kl}^{*}\right\} & = 4\sum_{k<l}\Gamma_{ik}^2\Gamma_{il}^2\var\{\tilde{A}_{kl}^{*}\} \\
	& = 4\sum_{k<l}\Gamma_{ik}^2\Gamma_{il}^2\bE\{(\tilde{A}_{kl}^{*})^2\} \\
	& \leq \dfrac{4}{n}.
\end{align*}
Thus, we have $\Gamma_i^\top\tilde{A}^{*}\Gamma_i = O_p(n^{-1/2})$.

Then, 
\begin{align*}
	\tr(\tilde{A}^{*}(\Delta^{\prime})^2) & = O_p(n^{-1/2})\sum_i\lambda_i^2 \\
	& = O_p(n^{-1/2})\sum_{j,k}\left(\dfrac{\hat{p}_{jk}-p_{jk}}{\sqrt{np_{jk}(1-p_{jk})}}\right)^2 \\
	& = O_p(n^{-1/2})o_p(n^{1/2}) \\
	& = o_p(1).
\end{align*}

Summarizing the above results, we have
\begin{equation}\label{eq:app2}
\tr((\tilde{A}^{\prime})^3-(\tilde{A}^{*})^3)=o_p(1).
\end{equation}

In addition, note that
\[
\tilde{A}_{ij}=\sqrt{\dfrac{np_{ij}(1-p_{ij})}{n\hat{p}_{ij}(1-\hat{p}_{ij})}}\tilde{A}^{\prime}_{ij}\qquad \text{for}\ i\neq j.
\]

Let $\Upsilon\in\bR^{n\times n}$ such that $\Upsilon_{ij}=\sqrt{\dfrac{np_{ij}(1-p_{ij})}{n\hat{p}_{ij}(1-\hat{p}_{ij})}}$. Then, we have $\tilde{A}=\Upsilon\circ\tilde{A}^{\prime}$, where ``$\circ$" denote the Hadmard product of two matrice.

Using  Chernoff bound, we have 
\[
\sqrt{p_{ij}(1-p_{ij})}=\sqrt{\hat{p}_{ij}(1-\hat{p}_{ij})}(1+o_p(n^{-1/4})),
\]
and
\[
\sqrt{\dfrac{np_{ij}(1-p_{ij})}{n\hat{p}_{ij}(1-\hat{p}_{ij})}}=1+o_p(n^{-1/4}).
\]
Then, it is not difficult to obtain that
\[
\tilde{A}=(1+o_p(n^{-1/4}))\tilde{A}^{\prime}
\]

It is simple to verify that
\[
	\tr(\tilde{A}^3)=(1+o_p(n^{-1/4}))^3\tr((\tilde{A}^{\prime})^3).
\]

Then, from \eqref{eq:app2} , we get
\[
\tr(\tilde{A}^3)=(1+o_p(n^{-1/4}))^3(\tr((\tilde{A}^*)^3)+o_p(1)).
\]
This completes the proof of Theorem \ref{thm:null}.

\section*{Funding}
Jianwei Hu is partially supported by the National Natural Science Foundation of China (nos. 12171187, 12371261).

\bibliographystyle{agu04}
\bibliography{refer}

\end{document}